\documentclass[twocoloumn]{IEEEtran}
\usepackage[utf8]{inputenc}
\usepackage{nomencl}
\usepackage{cite}
\usepackage[dvips]{graphicx,psfrag}
\usepackage{amsthm}
\usepackage{amsmath}
\usepackage{amsmath}
\usepackage{amssymb}
\usepackage{rotating}
\usepackage{subfigure}
\usepackage{cite}
\usepackage{color}
\usepackage{stfloats}
\newtheorem{theorem}{Theorem}
\newtheorem{lemma}{Lemma}
\newtheorem{corollary}{Corollary}

\usepackage{hyperref}
\usepackage{algorithmic}
\usepackage[linesnumbered,ruled,vlined]{algorithm2e}
\usepackage{mathtools}
\usepackage {tikz}
\usetikzlibrary {positioning}
\raggedbottom  
\definecolor {processblue}{cmyk}{0.96,0,0,0}
\allowdisplaybreaks

\newcommand\eqa{\mathrel{\overset{\makebox[0pt]{\mbox{\normalfont\tiny\sffamily (a)}}}{=}}}
\newcommand\eqb{\mathrel{\overset{\makebox[0pt]{\mbox{\normalfont\tiny\sffamily (b)}}}{=}}}
\newcommand\eqc{\mathrel{\overset{\makebox[0pt]{\mbox{\normalfont\tiny\sffamily (c)}}}{=}}}
\newcommand\eqd{\mathrel{\overset{\makebox[0pt]{\mbox{\normalfont\tiny\sffamily (d)}}}{=}}}
\newcommand\eqe{\mathrel{\overset{\makebox[0pt]{\mbox{\normalfont\tiny\sffamily (e)}}}{=}}}

\begin{document}
\title{Load-aware Performance Analysis of Cell Center/Edge Users in Random HetNets}
\author{Praful D. Mankar, Goutam Das,~\IEEEmembership{Member,~IEEE} and S. S. Pathak,~\IEEEmembership{Senior Member,~IEEE}
\thanks{Praful D. Mankar and S. S. Pathak are with the Dept. of E \& ECE; and Goutam Das is with the G. S. Sanyal School of Telecommunications, Indian Institute of Technology, Kharagpur, WB, India. E-mail: \{praful,ssp\}@ece.iitkgp.ernet.in and gdas@gssst.iitkgp.ernet.in}
}

\maketitle
\begin{abstract}
For real-time traffic, the link quality and call blocking probability (both derived from coverage probability) are realized to be poor for cell edge users (CEUs) compared to cell center users (CCUs) as the signal reception in the cell center region is better compared to the cell edge region. In heterogeneous networks (HetNets), the uncoordinated channel access by different types of base stations determine the interference statistics that further arbitrates the coverage probability. Thus, the spectrum allocation techniques have major impact on the performance of CCU and CEU. In this paper, the performance of CCUs and CEUs in a random two-tier network is studied for two spectrum allocation techniques namely: 1) co-channel (CSA), and 2) shared (SSA).  For performance analysis, the widely accepted conception of modeling the tiers of HetNet using independent homogeneous Poisson point process (PPP) is considered to accommodate the spatial randomness in location of BSs.
To incorporate the spatial randomness in the arrival of service and to aid the load-aware analysis, the cellular traffic is modeled using spatio-temporal PPP. Under this scenario, we have developed an analytical framework to evaluate the load-aware performance, including coverage and blocking probabilities, of CCUs and CEUs under both spectrum allocation techniques. 
Further, we provide insight into achievable area energy efficiency for SSA and CSA. The developed analytical framework is validated through extensive simulations. Next, we demonstrate the impact of traffic load and femto access points density on the performance of CCUs/CEUs under CSA and SSA. 
\end{abstract}
\begin{IEEEkeywords}
Cell center user, cell edge user, activity factor, coverage probability, blocking probability, femto access point, macro base station, Poisson point process.
\end{IEEEkeywords}
\IEEEpeerreviewmaketitle
\section{Introduction}
\IEEEPARstart{C}{ellular} network has been evolving since last few decades to meet the requirement of broadband connectivity, reliable communication, mobility, etc. The process resulted in inclusion of various types of base stations (like femto, pico, micro, etc.) differing in transmission power along with existing cellular/macro base stations (BSs). Because of heterogeneity in deployment, transmission power and functionality such networks are generally referred as heterogeneous networks (HetNets) or multi-tier networks. 
For performance evaluation of traditional cellular networks, the researchers mostly relied on highly approximated models like Wyner model \cite{Wyner}, grid-based hexagonal model \cite{Rappaport}, etc. However, the modeling of location of base stations (BSs) using homogeneous Poisson point process (PPP) is considered in \cite{Baccelli1997,Brown} for performance evaluation of a realistic scenario. Under this type of modeling, the seminal result on coverage probability in cellular networks produced in \cite{Andrews_2011}. Furthermore, on demand basis deployment of small BSs inherently introduces spatial randomness in a multi-tier network. Therefore, using the analysis of \cite{Andrews_2011} and applications of stochastic geometry, the practice of modeling tiers of a cellular network using independent homogeneous PPPs is being followed \cite{Dhillon_2011,Dhillon2012,HSJo_2012,Mukherjee_2011,Heath_2013}. This brings analytical tractability along with practical relevance into the analysis.  The interested reader may refer to \cite{Haenggi_Book,haenggi2009interference,Baccelli,Jeffrey_2016}.

Usually in HetNets, deployment of macro tier stands out for low bandwidth services (like VoIP) and ubiquitous link connectivity. However,  the link quality of a mobile user is subjected to its location. The users in the cell edge region usually receive weaker signal strength from the serving macro BS (MBS) and stronger interference from co-channel MBSs. Besides, massive inter-tier interference is imposed by subscriber-owned access points (such as fetmo cells). This leads the cell edge users (CEUs) to experience reduced coverage which further degrades their achievable transmission rate. 
However, interference from low powered femto access points (FAPs) and co-channel MBSs to cell center users (CCUs) is significantly below the signal strength received from its associated MBS. Therefore, better coverage can be experienced in the cell center region compared to the cell edge region. 
Further, the lower transmission rate in cell edge region makes CEUs to be bandwidth hungry. Thus, the blocking probability experienced by a CEU is significantly higher than that of a CCU. 
The intra-tier and inter-tier interference can be controlled by power control methods \cite{Wang_2015,Chandrasekhar_powercontrol} and/or spectrum allocation methods \cite{Boudreau_2009,Huawei_2005,Junyi_1999,Mahmud_2014,YoungjuKim_2010}. Investigation of power control methods is out of the scope of this paper as our focus is on the performance evaluation of CCUs and CEUs in heterogeneous scenario for different spectrum allocation methods.   Therefore, we have assumed rather a simple power allocation method where each BS-user link is established using equal transmission power irrespective of their distances.

 In literature, spectrum access techniques, like fractional frequency reuse (FFR) \cite{Boudreau_2009,Huawei_2005,Junyi_1999,Mahmud_2014}, soft FFR \cite{SFR,SFR1}, shared spectrum allocation \cite{YoungjuKim_2010}, etc., are investigated to uplift the coverage for CEUs. In FFR \cite{Boudreau_2009,Huawei_2005,Junyi_1999,Mahmud_2014}, the spectrum is divided into cell center and cell edge bands such that the cell center band is accessed with reuse one and cell edge band is accessed with reuse three. 
Thus the coverage of CEUs is improved at the loss of spectral efficiency in FFR. 
 On the other hand, the orthogonal cell edge sub-bands are allocated in the neighboring cells for CCUs with reduced power level in soft FFR \cite{SFR,SFR1} which improve the spectral efficiency. FFR and soft FFR are also investigated for multi-tier networks \cite{Hossain2013} wherein small BSs are deployed to operate in sub-bands orthogonal to the cell edge sub-band used in the macro cell. This restricts the inter-tier interference to the CEUs. 
In prior literature, performance analysis of these spectrum allocation techniques are studied in the grid based cellular networks \cite{Mahmud_2014,Giuliano_2008,Chang_2013}. 
However, investigation of the frequency allocation schemes like FFR, soft FFR, etc. have got limited attention in literature under the paradigm of stochastic network modeling because of \cite{ElSawy}: 1) the difficulty in defining the cell center and cell edge regions due to irregular cell sizes and 2) FFR/soft FFR brings the spatial correlation among the BSs having the same cell edge sub-band which violates the PPP assumption. 
The authors  of \cite{Novlan_2011,Novlan_2012,Zhuang2014} have investigated the performance of FFR/soft FFR methods while extending the analysis of \cite{Andrews_2011} and \cite{Dhillon2012,HSJo_2012} under cruel assumptions. To overcome the first problem, the cell center and cell edge users are categorized based on SINR threshold. Further, to tackle the second problem the worst case is considered wherein each BS randomly accesses one of the cell edge sub-bands. 
However, the assumption of instantaneous SINR based classification causes a user to randomly switch between CCU and CEU. Thus, the traffic flow through the disjoint cell center and cell edge bands becomes coupled which make the evaluation of BS activity and blocking probability intractable. 
On the other hand, though the distance-based classification is suitable for the grid based modeled networks \cite{Mahmud_2014}; this classification is not applicable for PPP modeled networks as the cells are irregular in shapes and sizes.  
In our analysis, we have introduced the parameter $R\in[0,1]$ to define the cell center and cell edge portions. The user is presumed to be in cell center region if the ratio of its distances of serving and dominant interfering MBS is greater than $R$. Otherwise the user is presumed to be in cell edge region.
This helps us to eliminate the first issue. However, the second issue is still unresolved which have restricted us from analyzing the spectrum allocation techniques like FFR wherein partially frequency reuse factor is less than one.

In \cite{YoungjuKim_2010}, a shared spectrum allocation (SSA) for two tier cellular networks is presented wherein femto cells and CCUs are considered to share a portion of the spectrum; and the remaining  portion of the spectrum is protected for CEUs. 
However, authors of \cite{YoungjuKim_2010} have ignored the fading effect and interference from co-channel macro base stations. The problem of BS-correlation (2nd problem discussed above) does not stand for analyzing the SSA using stochastic geometry as each macro cell accesses cell center and cell edge bands with reuse one.  Therefore, in this paper we have considered to investigate SSA along with co-channel spectrum allocation (CSA) using stochastic geometry by employing proposed approach for categorizing CCUs and CEUs using parameter $R$.

Furthermore, interference is basically dependent on the co-channel activity of the BSs that is characterized by the traffic intensity i.e. network load. 
The co-channel activity of MBSs increases with increase of the network load that further results in increased interference level. 
Nevertheless, the analysis of the above mentioned investigations is limited to best-effort traffic scenario only as all the BSs are assumed to be transmitting all the time. However, the actual coverage exceeds in real-time traffic scenario due to non-simultaneous transmission of BSs. 
In such scenarios, the activity factor, probability that a BSs accesses typical channel, distinguish the density of interfering BSs. The load-aware analysis of wireless networks  using the application stochastic geometry is still at its early stage. The notion of load-awareness in coverage analysis is introduced in \cite{Dhillon_2013}. Therein the activity factor of a BS is assumed to be known a priori. The activity factor of a BS in PPP modeled network for inelastic traffic is derived in \cite{Wei_Bao_2014_NearOptimal} under the consideration of unit bandwidth consumption per service arrival. However, the bandwidth requirement for a service is  region dependent as the achievable transmission rate drops as service location gets closer to the cell boundary.
 In \cite{Praful_BlockingProb}, we have presented a framework to evaluate the activity factor and blocking probability for cellular network. Therein, the cell traffic is modeled using multi-dimensional Markov chain using multi-class service arrival rates which are obtained via solving the nonlinear equations of coverage and activity coupling at different SIR thresholds of modulation and coding schemes.

The major aspects of cellular network design includes call blocking rate and coverage probability which are basically region dependent. 
Therefore, a comprehensive load-aware performance analysis in relation to a CCU  and a CEU is essential for underlying stochastic real-time traffic. 
Furthermore, the cell/network load is an important aspect for cell center and cell edge spectrum partitioning as it decides the co-channel activities of a typical BS for CCUs and CEUs. For example, setting bigger cell edge band reduces the co-channel activity of BSs that can raise the coverage probability for CEUs.
The SSA  has enough impetus to bridge the gap of performance between CCUs and CEUs as it reserves channels for CEUs with evaded inter-tier interference. However, in a lightly loaded network it is comprehensible that the co-channel spectrum allocation, wherein users of all tiers are allowed to access any channel regardless of being CCU or CEU, may yield better performance. This implies the performance of a spectrum allocation technique is subjective to the network parameters such as base stations density, traffic intensity, rate requirement, etc. Therefore, a comparative analysis of shared and co-channel spectrum allocation is required as well.

\subsection{Contributions of the paper}
In this paper, we investigate the performance of SSA and CSA in relation to a CCU and a CEU in a two-tier heterogeneous network under fractional real-time network load conditions. The spatial randomness is incorporated in the analysis by modeling the locations of MBSs and FAPs using independent homogeneous PPPs. The randomness in location of arriving users/services is modeled using space-time PPP (STPPP). 
We propose new criteria to define the cell center and cell edge regions.  A user is defined as CEU if it is having ratio of distances between associated MBS and dominant interfering MBS above a certain fraction; otherwise, the user is defined as CCU. 
Using this criteria, we can split the macro user arrival process into cell center and cell edge arrival processes to aid evaluation of BS activities and blocking probabilities.

In this paper we explicitly derive the expressions for coverage probability of users (CCUs and CEUs) and activity factors of an MBS under SSA and CSA. The evaluation of activity factors and blocking probabilities facilitated by modeling the cell center traffic and cell edge traffic using two independent one dimensional Markov chains (MCs) for SSA and single two dimensional MC for CSA. 
Further, relating the activity factor with energy spent and the call admission probability with overall transmission rate, we acquire insights of area energy efficiency achieved by an MBS under SSA and CSA deployment. {To the best of our knowledge, the area energy efficiency for real-time traffic under such a paradigm of network modeling is not yet investigated.} The derived framework for the evaluation of coverage and blocking probability of a CCU/CEU along with the evaluation of activity factor of an MBS is validated through extensive simulations. {Through numerical results we demonstrate that a properly configured SSA can realize a network with equal blocking probability for CCUs and CEUs at the cost of little increase of overall blocking probability. Further, we also show that a properly configured SSA can yield better coverage probability  and energy efficiency  compared to that under CSA. This gain in performance is found to be higher for lower/moderate density of FAPs. The performance gain in terms of overall coverage is observed to be decreases with increase in the load.} In the following we enlist the contributions of the paper.
\begin{itemize}
  \item Load aware coverage probabilities are derived for the cell center user and cell edge user in two-tiered network under the consideration of real-time service using the application of stochastic geometry.
  \item The developed framework is also extended for the numerical evaluation of the blocking probabilities for cell center/edge user and the area energy efficiency of the network. 
  \item The performance analysis is studied and compared for shared spectrum allocation and co-channel spectrum allocation. 
  \item The derived results are validated using extensive simulation results. Further, comparative analysis of shared spectrum allocation and co-channel spectrum allocation schemes for coverage, blocking probability and energy efficiency is presented.    
  \item Through numerical results it is demonstrated that properly configured parameter of shared  spectrum allocation can yield a fair blocking probability for cell center user and cell edge users.
 \end{itemize}

The rest of the paper is organized as follows. Section \ref{sec:System-Model} presents the system model and assumptions considered in the analysis. In Section \ref{sec:UserClassification} we discuss the classification of cell center and cell edge traffic.
The analysis of coverage probability and blocking probability under SSA and CSA is presented in Sections \ref{sec:CoverageAnalysis-Shared} and \ref{sec:CoverageAnalysis-Co-Channel} respectively. Next, Section \ref{sec:Area-Energy-Efficiency-Evaluation} presents the area energy efficiency analysis for SSA and CSA techniques.
The numerical results are discussed in Section \ref{sec:Numerical-Results-and-Discussion}. Finally, Section \ref{sec:Conclusion} concludes the paper. 
\section{System Model}
\label{sec:System-Model}
In this paper, we have considered a two-tier cellular network comprised of sets of MBSs and FAPs. Similar to \cite{Dhillon2012}, we model the locations of MBSs and FAPs  using independent homogeneous Poisson point processes \mbox{\small{$\Phi_B$}} and \mbox{\small{$\Phi_F$}} with densities  \mbox{\small{$\lambda_B$}} and \mbox{\small{$\lambda_F$}} units/m$^{2}$ in \mbox{\small{$\mathbb{R}^2$}}, respectively. In the following subsections we categorize the system model in detail and state the assumptions.  
\subsection{Spectrum allocation techniques}
Let  \mbox{\small{$\mathcal{N}$}} be the set of \mbox{\small{$N$}} channels each of bandwidth \mbox{\small{$B$}}.
Fig. \ref{fig:Spectrumallocation} shows the spectrum access policy under SSA technique. In SSA, the \mbox{\small{$p_m$}} fraction of spectrum is reserved for the CCUs and the FAPs; and remaining \mbox{\small{$1-p_m$}} fraction of spectrum is dedicated for the CEUs. The SSA allows the avoidance of inter-tier interference for CEUs which helps in improving the link quality in the cell edge region. On the other hand, under CSA technique, regardless of being cell center or cell edge, the user has access to a channel from the set \mbox{\small{$\mathcal{N}$}}. Moreover, full spectrum access is allowed for FAPs under CSA.

The transmission power and density of inter-tier BSs/APs is required for evaluation of CCUs' and CEUs' performance in heterogeneous scenario.  Therefore, we have considered a simplistic model to accommodate the impact of femto tier interference into the analysis. It is assumed that FAPs are uniformly accessing any one of the channel. Hence, effective density of co-channel FAPs becomes \mbox{\small{$\lambda_F/(Np_m$)}} under SSA and   \mbox{\small{$\lambda_F/N$}} under CSA. Nevertheless, the presented analysis can be easily extended to the scenario wherein complex channel access scheme is employed at the FAP to guarantee QoS by plugging the resultant co-channel FAPs density.
\begin{figure}[htp]
\centering
\includegraphics[trim=7cm 21.7cm 6cm 4cm, width=.23\textwidth]{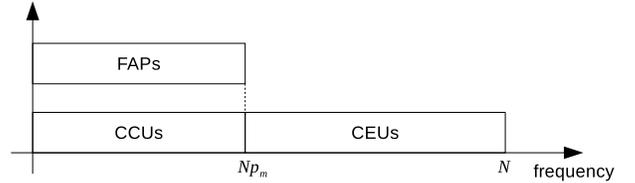} 
\caption{Description of SSA.}
\label{fig:Spectrumallocation}
\end{figure}
\subsection{SIR model}
The transmission power of an MBS and an FAP is denoted by  \mbox{\small{$P_B$}} and  \mbox{\small{$P_F$}}, respectively.  
Macro user is assumed to be associated to the MBS with maximum mean received power. A general power-law path loss model is used. Without loss of generality, a user is assumed to be located at the origin. The location of the serving MBS is denoted by \mbox{\small{$x_0$}}, i.e. \mbox{\small{$x_0=\arg\max_{x_i\in\Phi_B}P_B\|x_i\|^{-\alpha}$}}, where \mbox{\small{$\alpha$}} is the path loss exponent. 
Therefore, the signal-to-interference ratio (SIR) of CCU and CEU, given the associated MBS is at distance $r$ (i.e. \mbox{\small{$r=\|x_0\|$}}), under SSA can be written as\begingroup\makeatletter\def\f@size{8}\check@mathfonts
\begin{equation}
 \gamma_c(r)=\frac{h_0r^{-\alpha}}{I_c(r)+I_f} \text{ ~~~~and~~~~ }\gamma_e(r)=\frac{h_0r^{-\alpha}}{I_e(r)},
\end{equation}\endgroup
respectively. Where  \mbox{\small{$I_c(r)=I_e(r)=\sum_{x_i\in\Phi_B\setminus x_0}h_i\|x_i\|^{-\alpha}$}} represent the co-channel MBSs interference,   \mbox{\small{$I_f=\sum_{x_i\in\Phi_F}h_i\|x_i\|^{-\alpha}\tilde P_F$}} represents co-channel FAPs interference, and \mbox{\small{$\tilde P_F=\frac{P_F}{P_B}$}}. The \mbox{\small{$h_i$'s}} represent fading coefficients and are considered to be i.i.d random variables following exponential distribution with unit mean. 
As the considered network is interference-limited, we ignore the noise in the analysis. However, under CSA the only change required in the above SIR model is to add $I_f$ in the denominator of \mbox{\small{$\gamma_e(r)$}}.  
\subsection{Traffic modeling for CCUs and CEUs}
The macro tier is considered to provide real-time service having rate requirement of \mbox{\small{$R_{\text{th}}$}}.
The call arrival is modeled using space-time PPP (STPPP) \mbox{\small{$\Phi_U$}} with density \mbox{\small{$\lambda_M$}} units/(min$\cdot$m$^{2}$) in \mbox{\small{$\mathbb{R}^3$}} \cite{Praful_BlockingProb}. 
The admitted service stays in the network for an exponential distributed time with mean \mbox{\small{1/$\mu$}} min. 
Thus, the snap shot of admitted users in the network  becomes a PPP with intensity $\frac{\lambda_M}{\mu}(1-\mathcal{B})$ units/m$^2$ in \mbox{\small{$\mathbb{R}^2$}} where $\mathcal{B}$ represents blocking probability \cite{Kleinrock}.
\section{Classification of CCUs and CEUs Traffic}
\label{sec:UserClassification}
The distances between interfering MBSs and serving MBS are random as their locations are independent. This limits us to determine some fixed distance for classification of a user as cell center or cell edge. Further, referring to Voronoi tessellation of MBSs, the user, having its distances from serving and dominant interfering MBSs relatively closer, appears as flung in the cell edge region. Therefore, we term user as CEU if  \mbox{\small{$\frac{R_m}{R_d}>R$}} otherwise as CCU, where  \mbox{\small{$R_m$}} and \mbox{\small{$R_d$}} are distances of the closest and the second closest points in \mbox{\small{$\Phi_B$}} to the origin, and  \mbox{\small{$R$}} is a predefined fraction.  
The joint distribution of $R_m$ and $R_d$ can be written as \cite{Moltchanov} \begingroup\makeatletter\def\f@size{8}\check@mathfonts
\begin{equation}
f_{R_m,R_d}\left(r_m,r_d\right)=\left(2\pi\lambda_B\right)^2r_mr_d\exp(-\pi\lambda_B r_d^2).
 \label{eq:Joint_Distribution_Of_R1_and_R2}
\end{equation}\endgroup
Using \eqref{eq:Joint_Distribution_Of_R1_and_R2}, the probability of a user being CCU becomes\begingroup\makeatletter\def\f@size{8}\check@mathfonts
\begin{align}
 \mathbb{P}\left[\frac{R_m}{R_d}\leq R\right]&=\int\nolimits_{r_d=0}^{\infty}\int\nolimits_{r_m=0}^{r_dR}f_{R_m,R_d}\left(r_m,r_d\right)dr_mdr_d= R^2.
 \label{eq:ProbBeingCCU}
\end{align}\endgroup
Therefore, the probability of a user being CEU becomes\begingroup\makeatletter\def\f@size{8}\check@mathfonts
\begin{align}
\mathbb{P}\left[\frac{R_m}{R_d}>R\right]=1- \mathbb{P}\left[\frac{R_m}{R_d}\leq R\right]=1-R^2.
\label{eq:ProbBeingCEU}
\end{align}\endgroup
\begin{figure}[htp]
\centering
 \noindent\includegraphics[trim={3cm 8.5cm 3cm 9cm},clip,width=.26\textwidth]{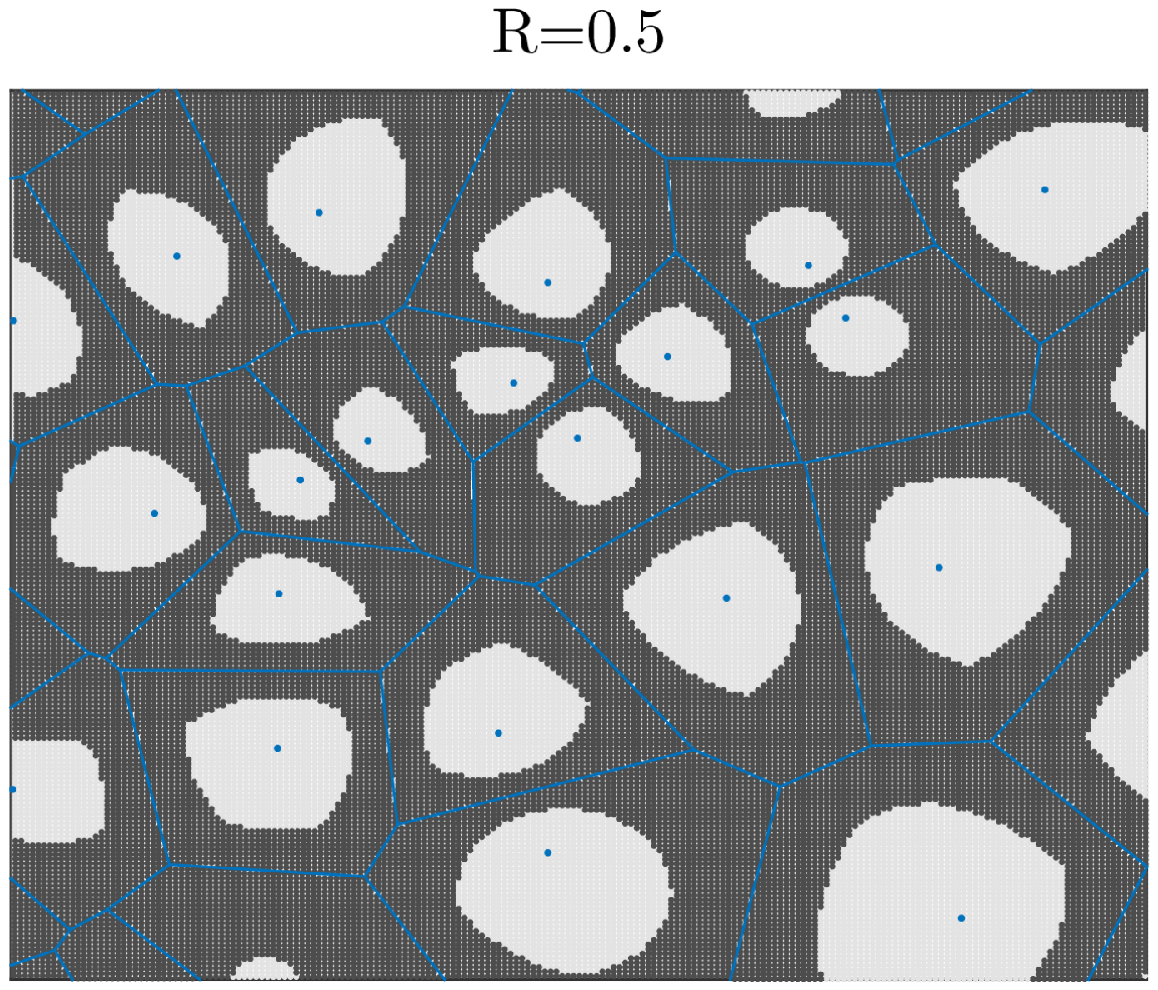} \hspace{-1cm}
 \includegraphics[trim={3cm 8.5cm 3cm 9cm},clip,width=.26\textwidth]{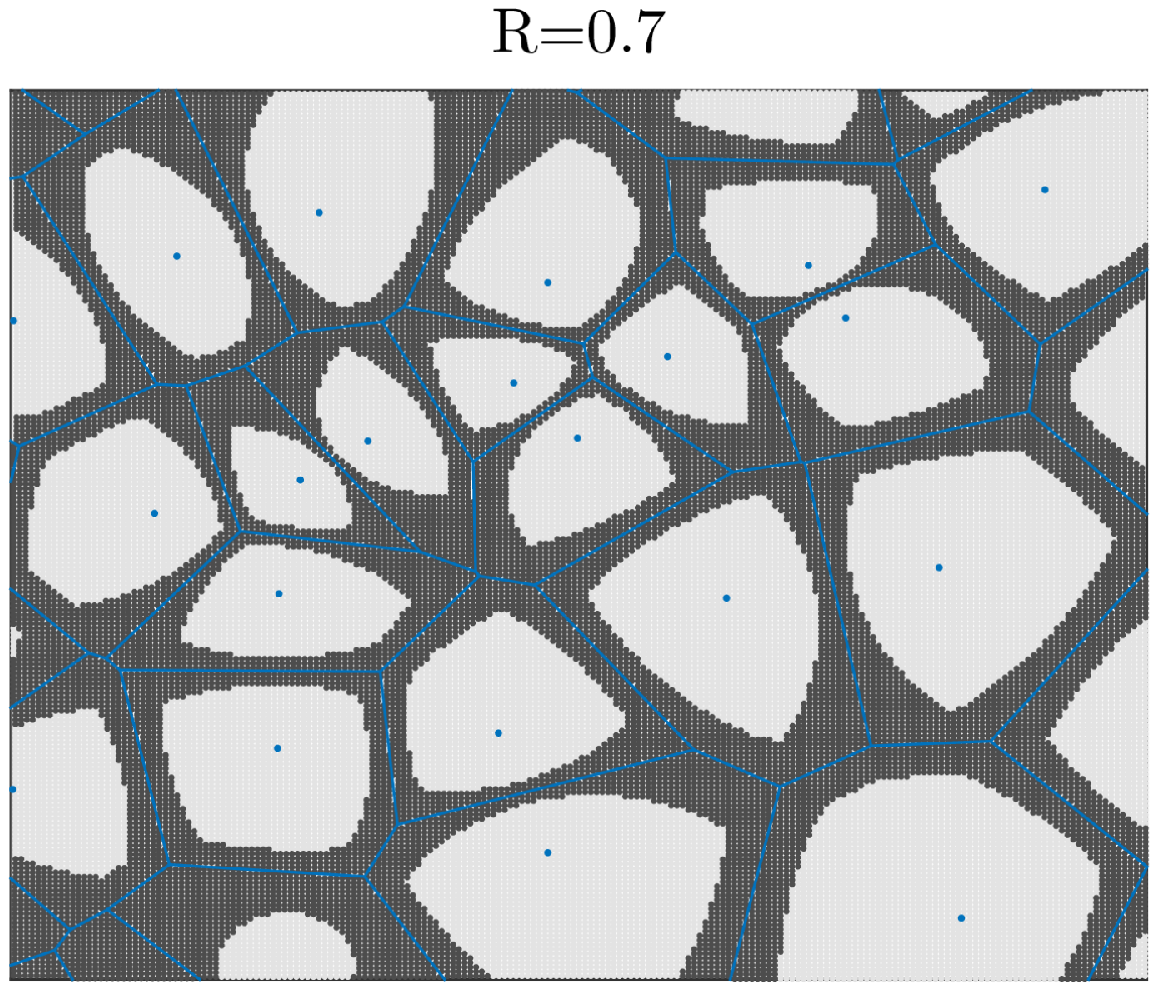} \vspace{-.75cm}
 \caption{The Voronoi tessellation of $\Phi_{B}$ where blue dots denote the MBSs, light and dark gray colors represent cell center and cell edge regions.}
\label{fig:Illustration_CUEU}
\end{figure}
Fig. \ref{fig:Illustration_CUEU} illustrate the relevance of the proposed metric for categorizing the users. It can be seen that the cell center/edge region (indicated by light/dark  gray) is monotonically increase/decrease with the increase of threshold \mbox{\small{$R$}}. Therefore, using \eqref{eq:ProbBeingCCU} and \eqref{eq:ProbBeingCEU} we can split the arrival process of macro user of rate \mbox{\small{$\lambda_M$}} into two process, \textit{namely}: 1) cell center user arrival process of rate  \mbox{\small{$\lambda_MR^2$}}, and 2) cell edge user arrival process of rate \mbox{\small{$\lambda_M(1-R^2)$}}. This is possible since the locations of users arrival are independent. Note that users of both processes have service rate of {{$\mu$}}.
\section{Performance Analysis under Shared Spectrum allocation (SSA)}
\label{sec:CoverageAnalysis-Shared}
The coverage probability (CovP) is defined as the probability of a user experiencing the SIR above threshold $\beta$. Denoting interference by $I$ and ignoring noise, the CovP can be expressed using the Laplace transform of $I$ as a function of $\beta$ \cite{Andrews_2011}, i.e. $\mathcal{C}(\beta)=\mathbb{P}(\frac{h_0Pr^{-\alpha}}{I}>\beta)=\mathbb{E}[\exp(I\frac{r^{\alpha}}{P})]$. Furthermore, in random networks the Laplace transform of interference is characterized using the density and transmission power of the co-channel BSs/APs. However, the load dependent transmission of the BSs, like in case of real-time traffic, limits the usage of channel set $\mathcal{N}$ to its fullest. This reduces the overall co-channel activity of the BSs. Assuming BSs access a channel uniformly and independently, the activity factor ($\zeta$) can be interpreted as: 1) probability that a typical BS randomly chooses a typical channel, 2) average fraction of BSs in the network those are co-channel.  Therefore, thinning the BSs density by activity factor determines the co-channel BSs density which can be used to characterize the Laplace transform of interference. It may be noted that the activity factor is dependent on the underlying traffic load and required data rate. However, the achievable rate  depends on the SIR distribution (or CovP) which further decides bandwidth requirement to satisfy the required data rate. This implies that the CovP and the activity factor are coupled together. Consider the example: transmission from a BS, say B1, generates interference to its neighboring BS, say B2, which forces B2 to transmit for a longer time which again interferes back to B1 and make B1 to transmit for even longer time.  The time in this example can be altered with bandwidth in multi-channel scenario. This example clearly depicts the coupling between the CovP and the activity factor.

Further, CovP of a CCU and a CEU is dependent on the spectrum allocation technique as it decides the types and densities of the interfering BSs. Moreover, the spectrum allocation technique along with the CovP determines various attributes of a network such as a user blocking probability (BlocP), BSs transmission rate, etc. Therefore, in this section we conduct the performance analysis for a CCU and a CEU under SSA, which is extended for the same under CSA in Section \ref{sec:CoverageAnalysis-Co-Channel}. In the following subsection, we derive the CovPs for a CCU and a CEU under SSA. In subsequent subsection, a framework is presented for the evaluation of activity factors of an MBS for the cell center and cell edge bands. Therein, we further evaluate the BlocP for CCUs and CEUs.
\subsection{CovP analysis}
In this section, we first evaluate the CovP of a CCU and a CEU for given activity factor of an MBS in the cell center (\mbox{\small{$\zeta_\text{SC}$}}) and the cell edge (\mbox{\small{$\zeta_\text{SE}$}}) bands. In case of fractional activity factor, the effective density of randomly located co-channel MBSs  is thinned by a fraction which is  equal to the activity factor. Different activity factors for cell center band and cell edge band result in different thinning of MBS density in realization of interference processes for a CCU and a CEU.  Therefore, the effective density of interfering MBS for a CCU and a CEU becomes \mbox{\small{$\zeta_\text{SC}\lambda_B$}} and \mbox{\small{$\zeta_\text{SE}\lambda_B$}}, respectively. Confining the FAPs to $p_m$ fraction of spectrum increases the their activity per channel. Therefore, assuming that each FAP accesses one channel, the effective density of FAPs per channel becomes \mbox{\small{$\tilde\lambda_F=\frac{\lambda_F}{p_mN}$}}. Note that the assumption of single channel access for FAP can be relaxed by introducing the activity factor for FAPs as well. 

We first provide the Laplace transform (LT) of intra-tier and inter-tier interference for a CCU in the following lemmas followed by a theorem to evaluate the CovP of a CCU. 
\begin{lemma}
\label{lemma:LT_FAP}
 The LT of femto tier interference to a CCU is \begingroup\makeatletter\def\f@size{8}\check@mathfonts
 \begin{equation}
  \mathcal{L}_{I_f}\left(s\right)=\exp\left(-\delta\pi\tilde\lambda_F(s\tilde P_F)^{\delta}\csc\left[\delta\pi\right]\right).
  \label{eq:LT_FAP}
 \end{equation}\endgroup
 where \mbox{\small{$\delta=\frac{2}{\alpha}$}}.
\end{lemma}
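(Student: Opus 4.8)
The plan is to evaluate $\mathcal{L}_{I_f}(s)=\mathbb{E}[\exp(-sI_f)]$ directly from the definition $I_f=\tilde P_F\sum_{x_i\in\Phi_F}h_i\|x_i\|^{-\alpha}$, exploiting that $\Phi_F$ is a homogeneous PPP (with effective per-channel density $\tilde\lambda_F=\lambda_F/(p_mN)$) independent of the i.i.d.\ unit-mean exponential fading marks $\{h_i\}$. First I would write the expectation as a product over the atoms of $\Phi_F$, namely $\mathcal{L}_{I_f}(s)=\mathbb{E}\big[\prod_{x_i\in\Phi_F}\exp(-s\tilde P_F h_i\|x_i\|^{-\alpha})\big]$, and take the expectation over a single exponential mark first. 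Using $\mathbb{E}_h[e^{-ah}]=(1+a)^{-1}$ for $h\sim\exp(1)$, each factor collapses to $(1+s\tilde P_F\|x_i\|^{-\alpha})^{-1}$, a deterministic function of the location alone.

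Next I would apply the probability generating functional of the PPP, $\mathbb{E}\big[\prod_{x\in\Phi_F}f(x)\big]=\exp\big(-\tilde\lambda_F\int_{\mathbb{R}^2}(1-f(x))\,dx\big)$, with $f(x)=(1+s\tilde P_F\|x\|^{-\alpha})^{-1}$. Since the integrand is radially symmetric, passing to polar coordinates ($dx=2\pi r\,dr$) and simplifying $1-f=s\tilde P_F/(r^\alpha+s\tilde P_F)$ reduces everything to the single radial integral $\mathcal{L}_{I_f}(s)=\exp\big(-2\pi\tilde\lambda_F\int_0^\infty \tfrac{s\tilde P_F\,r}{r^\alpha+s\tilde P_F}\,dr\big)$.

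The only computational step is this radial integral, which I would resolve with the substitution $u=r^\alpha/(s\tilde P_F)$. This recasts it into the Beta-function form $\tfrac{(s\tilde P_F)^\delta}{\alpha}\int_0^\infty \tfrac{u^{\delta-1}}{1+u}\,du$, where $\delta=2/\alpha$; convergence at both endpoints holds precisely because $\alpha>2$ forces $\delta\in(0,1)$, which is also the range in which Euler's reflection formula $\int_0^\infty u^{\delta-1}/(1+u)\,du=\Gamma(\delta)\Gamma(1-\delta)=\pi\csc(\delta\pi)$ is valid. Substituting this value and collecting the prefactors (using $2/\alpha=\delta$) yields the claimed exponential involving $\csc(\delta\pi)$. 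The main thing to watch is the careful assembly of the multiplicative constants --- the $2\pi$ from the area element, the $1/\alpha$ from the substitution, and the $\pi$ from the reflection formula --- since this is where a stray factor is most easily dropped.
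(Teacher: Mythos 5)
Your strategy is the standard first-principles derivation behind this result (condition on the exponential fading marks, apply the probability generating functional of the PPP, pass to polar coordinates, reduce to a Beta integral via Euler's reflection formula). The paper itself offers no derivation at all --- its proof is a one-line citation to (3.21) of Haenggi and Ganti --- so a self-contained write-up like yours is a legitimate substitute, and every intermediate expression you write is correct. The problem is the very last step, the one you yourself flagged as hazardous: collecting the prefactors does \emph{not} give the lemma's displayed constant. Your pieces assemble to the exponent $2\pi\tilde\lambda_F\cdot\frac{(s\tilde P_F)^{\delta}}{\alpha}\cdot\pi\csc(\delta\pi)=\delta\pi^{2}\tilde\lambda_F(s\tilde P_F)^{\delta}\csc(\delta\pi)$, which differs from the stated $\delta\pi\tilde\lambda_F(s\tilde P_F)^{\delta}\csc(\delta\pi)$ by exactly a factor of $\pi$. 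A sanity check at $\alpha=4$ (so $\delta=\tfrac{1}{2}$, $\csc(\pi/2)=1$) recovers the classical Rayleigh-fading value $\exp(-\pi^{2}\tilde\lambda_F\sqrt{s\tilde P_F}/2)$, again with $\pi^{2}$. So your closing assertion that the computation ``yields the claimed exponential'' silently drops precisely the stray factor you warned about.

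The deeper point is that the missing $\pi$ lies in the lemma as printed, not in your method: the equivalent closed form is $\exp\left(-\pi\tilde\lambda_F\Gamma(1+\delta)\Gamma(1-\delta)(s\tilde P_F)^{\delta}\right)$ with $\Gamma(1+\delta)\Gamma(1-\delta)=\pi\delta\csc(\pi\delta)$, and the paper's own Appendix A actually uses the $\pi^{2}$ version --- there the FAP contribution enters the exponent as $\pi\lambda_B r_c^{2}\cdot\delta\pi\frac{\tilde\lambda_F}{\lambda_B}(\beta\tilde P_F)^{\delta}\csc[\pi\delta]$, i.e. $\delta\pi^{2}\tilde\lambda_F(s\tilde P_F)^{\delta}\csc(\pi\delta)$ at $s=\beta r_c^{\alpha}$, which is what makes Theorem \ref{theorem:CCU_shared} come out consistently. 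A fully correct submission would carry your calculation to its actual conclusion and remark that the display in Lemma \ref{lemma:LT_FAP} is off by a factor of $\pi$ (a typo with no downstream effect on the paper's theorems), rather than asserting agreement with it.
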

\begin{proof}
 Refer (3.21) of \cite{haenggi2009interference}.
\end{proof}
\begin{lemma}
\label{lemma:LT_MBS_CCU}
 The LT of co-channel MBSs interference for CCU, at distance $r_c$ from its serving MBS, is given by \begingroup\makeatletter\def\f@size{8}\check@mathfonts
 \begin{equation}
  \mathcal{L}_{I_c}\left(s,r_c\right)=\exp\left(-\pi\zeta_\text{SC}\lambda_Bs^{\delta}\int_{\frac{r_c^2}{R^2s^{\delta}}}^\infty \frac{du}{1+u^\frac{1}{\delta}}\right).
  \label{eq:Laplace_MBSInterference_CC}
 \end{equation}\endgroup
\end{lemma}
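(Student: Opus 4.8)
The plan is to obtain the conditional Laplace transform directly from the probability generating functional (PGFL) of the thinned, interfering MBS process, with the cell-center classification event fixing the inner radius of the interference field.

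First I would condition on the serving distance $R_m=r_c$ together with the CCU event $R_m/R_d\le R$. Since the serving MBS is the nearest point of $\Phi_B$, every co-channel interferer lies beyond $r_c$; moreover the CCU event is equivalent to $R_d>r_c/R$, i.e.\ the annulus $(r_c,\,r_c/R]$ is devoid of points of $\Phi_B$. The step I expect to be the crux is the following: for a PPP the numbers of points in disjoint regions are independent, so conditioning on the emptiness of $(r_c,\,r_c/R]$ does not alter the law of the points in $\{\|x\|>r_c/R\}$. Hence, given the CCU event and $R_m=r_c$, the co-channel interferers form a homogeneous PPP of density $\zeta_\text{SC}\lambda_B$ (the activity-factor thinning of $\lambda_B$) restricted to $\{\|x\|>r_c/R\}$; in particular the inner radius $r_c/R$ is exact rather than an approximation.

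Next I would apply the PGFL to $I_c(r_c)=\sum_i h_i\|x_i\|^{-\alpha}$ and average over the i.i.d.\ unit-mean exponential marks using $\mathbb{E}_h[e^{-s h\|x\|^{-\alpha}}]=(1+s\|x\|^{-\alpha})^{-1}$, which gives
\[
\mathcal{L}_{I_c}(s,r_c)=\exp\left(-\zeta_\text{SC}\lambda_B\int_{\|x\|>r_c/R}\left(1-\frac{1}{1+s\|x\|^{-\alpha}}\right)dx\right).
\]
Passing to polar coordinates turns the exponent into $-2\pi\zeta_\text{SC}\lambda_B\int_{r_c/R}^{\infty}\frac{sv^{-\alpha}}{1+sv^{-\alpha}}\,v\,dv$, and the successive substitutions $t=v^2$ and $u=t/s^{\delta}$ (with $\delta=2/\alpha$) reduce the radial integral to $\frac{s^\delta}{2}\int_{r_c^2/(R^2 s^\delta)}^{\infty}\frac{du}{1+u^{1/\delta}}$. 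Collecting the constants yields the stated form. The fading average and the change of variables are routine; the only substantive content is the classification-induced guard radius $r_c/R$ and the independence argument that keeps the restricted interference field an unconditioned PPP on $\{\|x\|>r_c/R\}$.
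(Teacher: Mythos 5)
Your proof is correct and follows essentially the same route as the paper: the CCU event forces the exclusion ball $\mathcal{B}(0,r_c/R)$, so the interferers form a thinned PPP of density $\zeta_\text{SC}\lambda_B$ outside radius $r_c/R$, after which the PGFL computation with exponential fading and the substitutions $t=v^2$, $u=t/s^{\delta}$ reproduce \eqref{eq:Laplace_MBSInterference_CC} exactly as in the proof of Theorem 1 of \cite{Andrews_2011}, which the paper invokes by reference. The only difference is that you make explicit the disjoint-region independence argument justifying that conditioning on the empty annulus $(r_c,\,r_c/R]$ leaves the law of the points beyond $r_c/R$ unchanged, a step the paper leaves implicit.
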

\begin{proof}
 Referring to Fig. \ref{fig:Example_CUEU}, it is clear that the dominant interfering MBS for a CCU always lies beyond \mbox{\small{$\frac{r_c}{R}$}} when its serving MBS is at $r_c$. Thus, the co-channel interfering MBSs for a CCU situated at $r_c$ from its serving MBS have zero intensity in \mbox{\small{$\mathcal{B}(0,\frac{r_c}{R})$}}\footnote{\mbox{\small{$\mathcal{B}(0,a)$}} represents the ball of radius $a$ centered at origin} and \mbox{\small{$\zeta_\text{SC}\lambda_B$}} in \mbox{\small{$\mathbb{R}^2\setminus\mathcal{B}(0,\frac{r_c}{R})$}}. For rest of proof follow the proof of Theorem 1 of \cite{Andrews_2011}.
\end{proof}
\begin{theorem}
\label{theorem:CCU_shared}
 The CovP of a CCU under SSA is given by\begingroup\makeatletter\def\f@size{8}\check@mathfonts
 \begin{align}
  \mathcal{C}_{\text{\mbox{\tiny{SC}}}}(\beta)&=\left[1+\zeta_\text{SC}R^2\mathcal{H}(\beta,\delta,R)+\pi\delta R^2\frac{\tilde\lambda_F}{\lambda_B}(\beta\tilde{P}_f)^{\delta}\csc\left[\pi\delta\right]\right]^{-1},  
  \label{eq:CovPCC_Shared}
 \end{align}\endgroup
where \mbox{\small{$\tilde\lambda_F=\frac{\lambda_F}{p_mN}$}}, \mbox{\small{$\mathcal{H}(\beta,\delta,R)=\beta^{\delta}\int_{{R^{-2}\beta^{-\delta}}}^\infty \frac{du}{1+u^\frac{1}{\delta}}$}}. For \mbox{\small{$\delta=\frac{1}{2}$}}, we have \mbox{\small{$\mathcal{H}(\beta,\frac{1}{2},R)=\beta^\frac{1}{2}\arctan(R^2\beta^\frac{1}{2})$}}.
\end{theorem}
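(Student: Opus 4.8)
The plan is to apply the standard Rayleigh-fading coverage technique of \cite{Andrews_2011}: condition on the serving distance, exploit the exponential law of $h_0$ to turn the coverage event into a Laplace transform, and then average over the distance law \emph{conditioned on the user being a CCU}. Fixing the serving distance $r_c=R_m$, since $h_0$ is unit-mean exponential I would write
\begin{equation}
\mathbb{P}[\gamma_c(r_c)>\beta\mid r_c]=\mathbb{P}\big[h_0>\beta r_c^{\alpha}(I_c+I_f)\mid r_c\big]=\mathbb{E}\big[e^{-\beta r_c^{\alpha}(I_c+I_f)}\big].
\end{equation}
Because $I_c$ and $I_f$ are generated by the independent processes $\Phi_B$ and $\Phi_F$, this expectation factorizes as $\mathcal{L}_{I_c}(\beta r_c^{\alpha},r_c)\,\mathcal{L}_{I_f}(\beta r_c^{\alpha})$, so Lemmas~\ref{lemma:LT_MBS_CCU} and \ref{lemma:LT_FAP} apply directly with $s=\beta r_c^{\alpha}$.

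Next I would simplify the two exponents. The key observation is $\alpha\delta=2$, so every $(\beta r_c^{\alpha})^{\delta}$ reduces to $\beta^{\delta}r_c^{2}$; in particular the lower limit $\frac{r_c^2}{R^2 s^{\delta}}$ of Lemma~\ref{lemma:LT_MBS_CCU} collapses to the $r_c$-free value $R^{-2}\beta^{-\delta}$, which is exactly the integral defining $\mathcal{H}(\beta,\delta,R)$. Factoring the common $\pi\lambda_B r_c^2$ out of both exponents, the conditional coverage becomes $\exp\!\big(-\pi\lambda_B r_c^2\,g(\beta)\big)$ with $g(\beta)=\zeta_{\text{SC}}\mathcal{H}(\beta,\delta,R)+\pi\delta\frac{\tilde\lambda_F}{\lambda_B}(\beta\tilde P_F)^{\delta}\csc(\pi\delta)$.

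The step I expect to be the crux is obtaining the correct distance law to de-condition with: $r_c=R_m$ must follow its distribution \emph{given that the user is a CCU}, i.e.\ given $R_m\le R\,R_d$. Starting from the joint density \eqref{eq:Joint_Distribution_Of_R1_and_R2}, I would integrate out $R_d$ over $[r_c/R,\infty)$ and normalize by $\mathbb{P}[\text{CCU}]=R^2$ from \eqref{eq:ProbBeingCCU}; the elementary Gaussian integral then yields
\begin{equation}
f_{R_m\mid\text{CCU}}(r_c)=\frac{2\pi\lambda_B}{R^2}\,r_c\,e^{-\pi\lambda_B r_c^2/R^2}.
\end{equation}
It is precisely the rescaled exponent $\pi\lambda_B/R^2$ that supplies the $R^2$ factors multiplying $\zeta_{\text{SC}}\mathcal{H}$ and the femto term in the final answer; using the plain nearest-neighbour law instead would drop both $R^2$'s, so this conditioning must be handled with care.

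Finally I would assemble
\begin{equation}
\mathcal{C}_{\text{SC}}(\beta)=\frac{2\pi\lambda_B}{R^2}\int_0^\infty r_c\,\exp\!\Big(-\pi\lambda_B r_c^2\big[g(\beta)+R^{-2}\big]\Big)\,dr_c,
\end{equation}
which is again a Gaussian integral equal to $\big(R^2[g(\beta)+R^{-2}]\big)^{-1}=\big(1+R^2 g(\beta)\big)^{-1}$; expanding $R^2 g(\beta)$ reproduces the stated bracket. For the closed form at $\delta=\tfrac12$ I would use $1/\delta=2$, so that $\int_{R^{-2}\beta^{-1/2}}^\infty\frac{du}{1+u^{2}}=\tfrac{\pi}{2}-\arctan\!\big(R^{-2}\beta^{-1/2}\big)=\arctan\!\big(R^{2}\beta^{1/2}\big)$, giving $\mathcal{H}(\beta,\tfrac12,R)=\beta^{1/2}\arctan(R^{2}\beta^{1/2})$.
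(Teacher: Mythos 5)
Your proposal is correct and follows essentially the same route as the paper's Appendix A: condition on the serving distance and use the unit-mean exponential fading to factor coverage into the Laplace transforms of Lemmas \ref{lemma:LT_FAP} and \ref{lemma:LT_MBS_CCU} at $s=\beta r_c^{\alpha}$, de-condition with the CCU-conditioned nearest-neighbour density $f_{R_c}(r_c)=\frac{2\pi\lambda_B r_c}{R^2}e^{-\pi\lambda_B r_c^2/R^2}$ obtained by normalizing the joint law of $(R_m,R_d)$ by $R^2$, and evaluate the resulting Gaussian integral. Your explicit check of the $\delta=\frac{1}{2}$ arctangent identity is a small addition the paper leaves unstated, but the argument is otherwise identical.
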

\begin{proof}
See Appendix \ref{app:theorem_1}.
\end{proof}
\begin{figure}[t]
\centering
\includegraphics[trim=2cm 0.85cm 2cm 0.5cm, width=7cm,height=7cm]{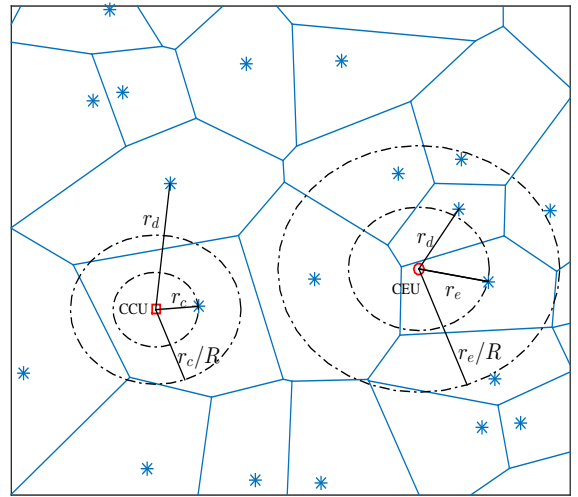} 
\caption{The Voronoi tessellation of $\Phi_{B}$ where blue star marks denote the MBSs, red square mark denotes CCU, and red circle denote CEU.}
\label{fig:Example_CUEU}
\end{figure}

Now, we provide the LT of intra-tier interference for a CEU in following lemmas followed by a theorem to evaluate the CovP of a CEU.
From Fig.  \ref{fig:Example_CUEU} can be observed that the serving MBS is at \mbox{\small{$r_e$}} and the dominant MBS is at \mbox{\small{$r_d$}}.  The dominant MBS lies within the ring formed by circles centered at the CEU of radius \mbox{\small{$r_e$}} and \mbox{\small{$\frac{r_e}{R}$}} such that \mbox{\small{$r_e\leq r_d< \frac{r_e}{R}$}}. 
This implies the condition of existence of a dominant MBS within \mbox{\small{$[r_e,\frac{r_e}{R}]$}}. 
Moreover, there is a probability \mbox{\small{$\zeta_\text{SE}$}} (i.e. activity factor) with which an MBS becomes co-channel as a typical MBS access a typical channel with probability \mbox{\small{$\zeta_\text{SE}$}}. Therefore, applying thinning the PPP \mbox{\small{$\Phi_{B}$}} of MBSs by probability \mbox{\small{$\zeta_\text{SE}$}} yields the process of co-channel MBSs which is also a PPP with modified density $\zeta_\text{SE}\lambda_{B}$ . As each MBS chooses a channel independently, there is probability $\zeta_\text{SE}$ with which the dominant MBS persist in thinned process of co-channel MBS and probability $1-\zeta_\text{SE}$ with which the dominant MBS do not persist in thinned process of co-channel MBS.  
In following two cases we describe the evaluation of LT of intra-tier interference to CEU.
\newline\textit{Case 1 (dominant MBS does persist in thinned process):} In this case the condition has to be implied about the existence of at least one interfering MBSs inside the annular ring as the dominant MBS is bound to exits within the ring.
Note that in homogeneous PPP the number of nodes in disjoint areas are independent and Poisson distributed. Therefore, we can split the co-channel interfering MBSs in two sets \mbox{\small{$\mathcal{S}_1$}} and \mbox{\small{$\mathcal{S}_2$}}. For given $r_e$, we define the set \mbox{\small{$\mathcal{S}_1(r_e)=\{x\in\Phi_B|~r_e\leq\|x\|\leq \frac{r_e}{R}\}$}} and the set \mbox{\small{$\mathcal{S}_2(r_e)=\{x\in\Phi_B|~\|x\|> \frac{r_e}{R}\}$}}. The definition of CEU implies that the dominant MBS is within \mbox{\small{$\mathcal{S}_1$}} since \mbox{\small{$r_e\leq r_d\leq\frac{r_e}{R}$}}. Therefore, the set \mbox{\small{$\mathcal{S}_1(r_e)$}} must include at least one node. 
Hence, the set of co-channel MBSs for a CEU follows the PPP with zero density in  \mbox{\small{$\mathcal{B}(0,r_e)$}} and \mbox{\small{$\zeta_\text{SE}\lambda_B$}} in  \mbox{\small{$\mathcal{B}(0,\frac{r_e}{R})\setminus\mathcal{B}(0,r_e)$}} (conditioned on minimum one point exist) and \mbox{\small{$\mathbb{R}^2\setminus\mathcal{B}(0,\frac{r_e}{R})$}}.
Let \mbox{\small{$I_{e1}$}} and \mbox{\small{$I_{e2}$}} represents the interference generated from nodes in the set \mbox{\small{${\mathcal{S}_1}(r_e)$}} and \mbox{\small{${\mathcal{S}_2}(r_e)$}}, respectively.
Thus, the resultant interference $I_e$ is the addition of the interference \mbox{\small{$I_{e1}$}} and \mbox{\small{$I_{e2}$}}.  
Let $\mathcal{L}_{I_e}^+(s)$ denote the LT of $I_e$ where the $+$ sign indicate the condition of the dominant MBS is interfering. The $\mathcal{L}_{I_e}^+(s)$ is derived in Lemma \ref{lemma_3}. 
\newline\textit{Case 2 (dominant MBS does not persist in thinned process):} In this case the condition of existence of at least one interfering MBS within $\frac{r_e}{R}$ is relaxed as the dominant MBS is not transmitting. Therefore, the set of all interfering MBSs includes the co-channel MBSs beyond distance $r_e$. Hence, the set of co-channel MBSs for a CEU follows the PPP with zero density in  \mbox{\small{$\mathcal{B}(0,r_e)$}} and \mbox{\small{$\zeta_\text{SE}\lambda_B$}} in  \mbox{\small{$\mathbb{R}^2\setminus\mathcal{B}(0,r_e)$}}. The LT of interference $I_e$ in this case is given in Lemma \ref{lemma_4}.

 \begin{figure*}
 \begingroup\makeatletter\def\f@size{7.8}\check@mathfonts
  \begin{align}
   &\mathcal{C}_{\text{\mbox{\tiny{SE}}}}(\beta)=\frac{1-\zeta_\text{SE}}{1-R^2}\sum\limits_{l=0}^{1}\frac{\left(-R^{2}\right)^l}{1+R^{2\cdot l}\zeta_\text{SE}\mathcal{H}(\beta,\delta,1)}+\frac{\zeta_\text{SE}R^2}{1-R^2}\sum\limits_{n=0}^{\infty}\sum\limits_{l=0}^{1}\sum\limits_{k=0}^{1}\frac{\left(-1\right)^{k+l+1}}{\left(n+k\right)\zeta_\text{SE}(1-R^2)+|k-1|\zeta_\text{SE}R^2\mathcal{G}(\beta,\delta,R)+\zeta_\text{SE}R^2\mathcal{H}(\beta,\delta,R)+R^{2\cdot l}},
  \label{eq:CovPEC_Shared1}\tag{10}
  \end{align}
  \text{where}~$\mathcal{G}(\beta,\delta,R)=\mathcal{H}(\beta,\delta,1)-\mathcal{H}(\beta,\delta,R)$. For $\delta=\frac{1}{2}$: $\mathcal{G}(\beta,\frac{1}{2},R)=\beta^\frac{1}{2}[\arctan(\beta^\frac{1}{2})-\arctan(R^2\beta^\frac{1}{2})]$,
\hrule\endgroup 
  \end{figure*} 
\begin{lemma}
\label{lemma_3}
 The LT of co-channel MBSs interference for a CEU, at distance $r_e$ from its serving MBS, conditioned on the transmission of dominant interferer is given by \begingroup\makeatletter\def\f@size{8}\check@mathfonts
 \begin{align}
&\mathcal{L}_{I_{e}}^+\left(s,r_e\right)=\frac{1}{1-\exp(-cr_e^2)}\exp\left(-\pi\zeta_\text{SE}\lambda_Bs^{\delta}\int_{\frac{r_e^2}{R^2s^{\delta}}}^\infty \frac{du}{1+u^\frac{1}{\delta}}\right)\nonumber\\
&~~~~~~~~~\left[\exp\left(-\pi\zeta_\text{SE}\lambda_Bs^\delta\int_{{\frac{r_e^2}{s^{\delta}}}}^{\frac{r_e^2}{R^2s^{\delta}}} \frac{du}{1+u^\frac{1}{\delta}}\right)-\exp(-cr_e^2)\right],  
\label{eq:Laplace_MBSInterference_CEU_Ie+}
 \end{align}\endgroup
 where \mbox{\small{$c=\pi\zeta_\text{SE}\lambda_B(R^{-2}-1)$}}.
\end{lemma}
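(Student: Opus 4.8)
The plan is to exploit the independence of a Poisson point process on disjoint regions to factor $\mathcal{L}_{I_e}^+(s,r_e)$ into the Laplace transform of the interference $I_{e2}$ coming from the outer set $\mathcal{S}_2(r_e)$ and that of the interference $I_{e1}$ coming from the annulus $\mathcal{S}_1(r_e)$, and then to impose the ``at least one interferer in the annulus'' requirement on the $\mathcal{S}_1$ factor alone. After thinning by $\zeta_\text{SE}$, the co-channel MBSs form a PPP of density $\zeta_\text{SE}\lambda_B$; since $\mathcal{B}(0,r_e/R)\setminus\mathcal{B}(0,r_e)$ and $\mathbb{R}^2\setminus\mathcal{B}(0,r_e/R)$ are disjoint, $I_{e1}$ and $I_{e2}$ are independent, so $\mathcal{L}_{I_e}^+(s,r_e)=\mathcal{L}_{I_{e2}}(s)\,\mathcal{L}_{I_{e1}}^+(s)$.

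First I would compute the outer factor, which is the Laplace transform of interference from a PPP of density $\zeta_\text{SE}\lambda_B$ on $\mathbb{R}^2\setminus\mathcal{B}(0,r_e/R)$ with i.i.d.\ unit-mean exponential fading. Applying the probability generating functional of the PPP together with $\mathbb{E}_h[e^{-shv^{-\alpha}}]=(1+sv^{-\alpha})^{-1}$, passing to polar coordinates, and substituting $u=v^2s^{-\delta}$ (which uses $\delta\alpha/2=1$), exactly as in the proof of Lemma~\ref{lemma:LT_MBS_CCU} and Theorem~1 of \cite{Andrews_2011}, yields the first exponential factor with lower limit $r_e^2/(R^2s^{\delta})$. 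The identical computation restricted to the annulus $[r_e,r_e/R]$ produces the \emph{unconditional} annular transform $\mathcal{L}_{I_{e1}}^{\text{unc}}(s)=\exp\!\big(-\pi\zeta_\text{SE}\lambda_Bs^{\delta}\int_{r_e^2/s^{\delta}}^{r_e^2/(R^2s^{\delta})}(1+u^{1/\delta})^{-1}\,du\big)$.

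The key step, and the one I expect to be the main obstacle, is translating the event that the dominant interferer transmits (Case~1) into a conditioning on the annulus and then applying it only to the $\mathcal{S}_1$ factor. Since a CEU always has its dominant MBS inside the annulus, the requirement that this MBS persists in the thinned process is modeled as the annulus containing at least one co-channel point. Let $N$ be the number of co-channel MBSs in the annulus; $N$ is Poisson with mean $\Lambda=\zeta_\text{SE}\lambda_B\pi(r_e^2/R^2-r_e^2)=c\,r_e^2$, where $c=\pi\zeta_\text{SE}\lambda_B(R^{-2}-1)$. Because $I_{e1}=0$ on $\{N=0\}$, conditioning on $N$ gives $\mathcal{L}_{I_{e1}}^{\text{unc}}(s)=e^{-cr_e^2}+(1-e^{-cr_e^2})\,\mathcal{L}_{I_{e1}}^+(s)$, whence $\mathcal{L}_{I_{e1}}^+(s)=\big(\mathcal{L}_{I_{e1}}^{\text{unc}}(s)-e^{-cr_e^2}\big)/(1-e^{-cr_e^2})$. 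The legitimacy of conditioning the annular factor while leaving $\mathcal{L}_{I_{e2}}(s)$ untouched rests on the void event $\{N\ge 1\}$ depending only on the annulus process and hence being independent of $I_{e2}$, so that the conditioning and the product factorization commute. Multiplying $\mathcal{L}_{I_{e2}}(s)$ by $\mathcal{L}_{I_{e1}}^+(s)$ and pulling the prefactor $1/(1-e^{-cr_e^2})$ to the front then reproduces \eqref{eq:Laplace_MBSInterference_CEU_Ie+} exactly.
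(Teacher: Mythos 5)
Your proposal is correct and follows essentially the same route as the paper's proof: the same factorization $\mathcal{L}_{I_e}^+(s,r_e)=\mathcal{L}_{I_{e1}}^+(s,r_e)\,\mathcal{L}_{I_{e2}}(s,r_e)$ over the disjoint annulus and outer region, the same Poisson count in the annulus with mean $cr_e^2=\pi\zeta_\text{SE}\lambda_B r_e^2(R^{-2}-1)$, and the same renormalization by $1-e^{-cr_e^2}$. The only cosmetic difference is that the paper reaches the conditional annular factor by substituting the per-point average into the conditional generating function $\mathbb{E}[z^K\,|\,K\geq 1]$, whereas you first compute the unconditional annular transform from the PGFL and then rearrange via the void-event identity $\mathcal{L}_{I_{e1}}^{\text{unc}}(s)=e^{-cr_e^2}+(1-e^{-cr_e^2})\mathcal{L}_{I_{e1}}^+(s)$ --- algebraically the identical computation, with your independence remark justifying what the paper leaves implicit.
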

\begin{proof}
See Appendix \ref{app:lemma_3}
\end{proof}
\begin{lemma}
\label{lemma_4}
 The LT of co-channel MBSs interference for a CEU, at distance \mbox{\small{$r_e$}} from its serving MBS, with the condition that the dominant interferer is not transmitting is given by \begingroup\makeatletter\def\f@size{8}\check@mathfonts
 \begin{equation}
  \mathcal{L}_{I_{e}}\left(s,r_e\right)=\exp\left(-\pi\zeta_\text{SE}\lambda_Bs^{\delta}\int_{\frac{r_e^2}{s^{\delta}}}^\infty \frac{du}{1+u^\frac{1}{\delta}}\right).
  \label{eq:Laplace_MBSInterference_CEU_Ie}
 \end{equation}\endgroup
\end{lemma}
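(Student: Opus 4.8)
The plan is to recognise Case~2 as a textbook computation of the Laplace transform of interference generated by a homogeneous PPP with a circular guard region, exactly parallel to Lemma~\ref{lemma:LT_MBS_CCU} and the proof of Theorem~1 in \cite{Andrews_2011}, but with the guard radius equal to $r_e$. First I would pin down the point process of co-channel interferers. Since the serving MBS is the closest point of $\Phi_B$ and lies at distance $r_e$, no MBS — and hence no interferer — can lie in $\mathcal{B}(0,r_e)$. Because in Case~2 the dominant (second-closest) MBS is assumed silent on the typical channel, the ``at least one co-channel point in the annulus'' constraint that appeared in Lemma~\ref{lemma_3} is dropped; the surviving co-channel MBSs are then an \emph{unconditioned} homogeneous PPP of density $\zeta_\text{SE}\lambda_B$ on $\mathbb{R}^2\setminus\mathcal{B}(0,r_e)$, obtained by independently $\zeta_\text{SE}$-thinning $\Phi_B$.

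Next I would apply the probability generating functional of this PPP, which (using that the fading marks $h_i$ are i.i.d.\ and independent of $\Phi_B$) gives
\[
\mathcal{L}_{I_e}(s,r_e)=\exp\!\left(-\zeta_\text{SE}\lambda_B\!\int_{\mathbb{R}^2\setminus\mathcal{B}(0,r_e)}\mathbb{E}_{h}\!\left[1-e^{-sh\|x\|^{-\alpha}}\right]dx\right).
\]
Exploiting that each $h_i$ is unit-mean exponential reduces the inner expectation to $s\|x\|^{-\alpha}/(1+s\|x\|^{-\alpha})$, after which I would switch to polar coordinates ($dx=2\pi v\,dv$, $v=\|x\|$) and perform the change of variable $u=v^{2}/s^{\delta}$ with $\delta=2/\alpha$. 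A one-line check shows $v^{\alpha}/s=u^{1/\delta}$, $v\,dv=\tfrac12 s^{\delta}du$, and $r_e\mapsto r_e^{2}/s^{\delta}$; substituting collapses the integral to $\pi\zeta_\text{SE}\lambda_B s^{\delta}\int_{r_e^{2}/s^{\delta}}^{\infty}(1+u^{1/\delta})^{-1}du$, i.e.\ precisely \eqref{eq:Laplace_MBSInterference_CEU_Ie}. These last manipulations are identical in form to those already carried out for the CCU in Lemma~\ref{lemma:LT_MBS_CCU}, so I would simply cite that computation rather than repeat it.

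I expect the only genuinely delicate point to be the modelling step in the first paragraph: justifying that conditioning on a non-transmitting dominant interferer permits an unconditioned PPP with guard radius exactly $r_e$, rather than excluding out to the (random) second-closest distance $r_d\in[r_e,r_e/R]$ or retaining an annulus-occupancy constraint as in Lemma~\ref{lemma_3}. Once this geometric/conditioning reduction is granted, the PGFL evaluation and the substitution are entirely routine, and Lemma~\ref{lemma_4} emerges as the unconditioned companion of Lemma~\ref{lemma_3}.
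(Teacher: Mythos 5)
Your proposal is correct and follows essentially the same route as the paper: the paper's proof likewise observes that, with the dominant MBS silent, the co-channel interferers have zero intensity in $\mathcal{B}(0,r_e)$ and intensity $\zeta_\text{SE}\lambda_B$ on $\mathbb{R}^2\setminus\mathcal{B}(0,r_e)$, and then defers the PGFL/polar-coordinate computation to Theorem~1 of \cite{Andrews_2011}, exactly the calculation you spell out. Your remark on the modelling subtlety (guard radius $r_e$ versus the random second-nearest distance $r_d$) is well taken, but the paper adopts the same reduction without further comment, so no discrepancy arises.
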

\begin{proof}
Referring to Fig. \ref{fig:Example_CUEU}, it is clear that the dominant interfering MBS for a CEU always lies beyond \mbox{\small{${r_e}$}} when its serving MBS is at $r_e$. Given that the dominant MBS is not transmitting over the same channel, the co-channel MBSs for a CEU situated at $r_e$ from its serving MBS have zero intensity in \mbox{\small{$\mathcal{B}(0,{r_e})$}} and \mbox{\small{$\zeta_\text{SE}\lambda_B$}} in \mbox{\small{$\mathbb{R}^2\setminus\mathcal{B}(0,r_e)$}}. For rest of proof follow the proof of Theorem 1 of \cite{Andrews_2011}.
\end{proof}
\begin{theorem}
\label{theorem:CEU_shared}
 The CovP of a CEU under SSA is given by \eqref{eq:CovPEC_Shared1}
\end{theorem}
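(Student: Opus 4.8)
The plan is to write the CEU coverage probability as an expectation of the conditional Laplace transform over the serving distance $r_e$, and then collapse the resulting single integral into the closed form \eqref{eq:CovPEC_Shared1}. Since $h_0$ is unit-mean exponential, conditioning on $r_e=\|x_0\|$ gives $\mathbb{P}[\gamma_e(r_e)>\beta\mid r_e]=\mathbb{E}[\exp(-\beta r_e^\alpha I_e)]=\mathcal{L}_{I_e}(\beta r_e^\alpha,r_e)$, exactly the coverage identity used for Theorem \ref{theorem:CCU_shared}. The total transform is the mixture dictated by the two cases preceding Lemmas \ref{lemma_3} and \ref{lemma_4}: with probability $\zeta_\text{SE}$ the dominant MBS is co-channel, giving $\mathcal{L}_{I_e}^+$, and with probability $1-\zeta_\text{SE}$ it is not, giving $\mathcal{L}_{I_e}$. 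Hence I would start from $\mathcal{C}_{\text{SE}}(\beta)=\int_0^\infty f_{R_m}(r_e\mid\text{CEU})\,[\zeta_\text{SE}\mathcal{L}_{I_e}^+(\beta r_e^\alpha,r_e)+(1-\zeta_\text{SE})\mathcal{L}_{I_e}(\beta r_e^\alpha,r_e)]\,dr_e$, where the explicit weights $\zeta_\text{SE}$ and $1-\zeta_\text{SE}$ will survive as the prefactors of the two sums in \eqref{eq:CovPEC_Shared1}.

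Next I would obtain the conditional law of $r_e$. Given the closest MBS at $r_e$, the CEU event $R_m/R_d>R$ is equivalent to at least one MBS lying in the annulus $[r_e,r_e/R]$, which under the homogeneous PPP beyond $r_e$ has probability $1-\exp(-\pi\lambda_B r_e^2(R^{-2}-1))$. Multiplying the nearest-neighbour density $2\pi\lambda_B r_e\exp(-\pi\lambda_B r_e^2)$ by this factor and normalising by $\mathbb{P}[\text{CEU}]=1-R^2$ from \eqref{eq:ProbBeingCEU} yields $f_{R_m}(r_e\mid\text{CEU})=\frac{2\pi\lambda_B r_e(\exp(-\pi\lambda_B r_e^2)-\exp(-\pi\lambda_B r_e^2R^{-2}))}{1-R^2}$, which I would cross-check by marginalising the joint law \eqref{eq:Joint_Distribution_Of_R1_and_R2}.

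The decisive simplification is the substitution $s=\beta r_e^\alpha$, for which $s^\delta=\beta^\delta r_e^2$ because $\alpha\delta=2$. This clears all $r_e$-dependence from the integration limits in Lemmas \ref{lemma_3} and \ref{lemma_4}, so every exponent is linear in $r_e^2$: the ``beyond $r_e/R$'' term becomes $\pi\zeta_\text{SE}\lambda_B r_e^2\,\mathcal{H}(\beta,\delta,R)$, the annulus term $\pi\zeta_\text{SE}\lambda_B r_e^2\,\mathcal{G}(\beta,\delta,R)$, the ``beyond $r_e$'' term $\pi\zeta_\text{SE}\lambda_B r_e^2\,\mathcal{H}(\beta,\delta,1)$, and the conditioning exponent is $cr_e^2=\pi\zeta_\text{SE}\lambda_B r_e^2(R^{-2}-1)$. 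With the change of variable $t=\pi\lambda_B r_e^2$ the integrand is a combination of pure exponentials $\exp(-a\,t)$, each integrating to $1/a$. The $(1-\zeta_\text{SE})$ part (Lemma \ref{lemma_4}) is then immediate: writing the density factor as $e^{-t}-e^{-tR^{-2}}$ and integrating the two exponentials against $\exp(-\zeta_\text{SE}t\,\mathcal{H}(\beta,\delta,1))$ produces the first sum of \eqref{eq:CovPEC_Shared1}, the $l$ index encoding the two terms and $R^2$ emerging from $1/(R^{-2}+\zeta_\text{SE}\mathcal{H}(\beta,\delta,1))$.

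The work concentrates in the $\zeta_\text{SE}$ part (Lemma \ref{lemma_3}), whose obstacle is the prefactor $1/(1-e^{-cr_e^2})$ in \eqref{eq:Laplace_MBSInterference_CEU_Ie+}. I would expand it as the geometric series $\sum_{n\ge 0}e^{-ncr_e^2}$, which is precisely what generates the infinite $n$-sum; the two-term bracket of \eqref{eq:Laplace_MBSInterference_CEU_Ie+} supplies the index $k$ and the density factor the index $l$. Collecting the coefficient of $t$ for each $(n,l,k)$ gives the exponent $R^{-2l}+(n+k)\zeta_\text{SE}(R^{-2}-1)+|k-1|\zeta_\text{SE}\mathcal{G}(\beta,\delta,R)+\zeta_\text{SE}\mathcal{H}(\beta,\delta,R)$, and term-by-term integration, followed by multiplying numerator and denominator by $R^2$ (which relabels $l\mapsto 1-l$, converts $R^{-2l}$ into $R^{2l}$, and fixes the sign to $(-1)^{k+l+1}$), reproduces the triple sum of \eqref{eq:CovPEC_Shared1}. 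The anticipated difficulty is therefore bookkeeping rather than conceptual: justifying the interchange of sum and integral (which holds since $c>0$ keeps every exponent strictly positive, giving a dominating convergent majorant) and tracking the powers of $R$ and signs so the two cases assemble correctly. A secondary subtlety I would keep explicit is that the Lemma \ref{lemma_3} conditioning uses the thinned intensity (so $c\propto\zeta_\text{SE}$), whereas $f_{R_m}(r_e\mid\text{CEU})$ uses the full intensity $\lambda_B$; I would verify the final expression collapses correctly in the limiting case $\zeta_\text{SE}=1$.
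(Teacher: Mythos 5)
Your proposal is correct and follows essentially the same route as the paper's proof: the $\zeta_\text{SE}/(1-\zeta_\text{SE})$ mixture over the dominant MBS's activity, the conditional density $f_{R_e}(r_e)$ given the CEU event (your annulus-occupation argument is an equivalent shortcut to the paper's integration of the joint law \eqref{eq:Joint_Distribution_Of_R1_and_R2}), the substitution $s=\beta r_e^\alpha$ rendering all exponents linear in $r_e^2$, and the geometric-series expansion of $[1-\exp(-cr_e^2)]^{-1}$ followed by term-by-term integration. Your index bookkeeping (the $n$-sum from the series, $k$ from the bracket in \eqref{eq:Laplace_MBSInterference_CEU_Ie+}, $l$ from the density, and the relabeling $l\mapsto 1-l$ producing the sign $(-1)^{k+l+1}$) matches \eqref{eq:CovPEC_Shared1} exactly, and your explicit justification of the sum--integral interchange is a point the paper leaves implicit.
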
 
\begin{proof}
  See Appendix \ref{app:theorem_2}
\end{proof}
  Let \mbox{\small{$S_K$}} represent the sums in the second term of \eqref{eq:CovPEC_Shared1} for \mbox{\small{$n=0\dots K$}}. It may be noted that each summand of \mbox{\small{$S_K$}} monotonically decreases with \mbox{\small{$n$}} which implies \mbox{\small{$S_K-S_{K-1} > S_{K+1}-S_K$}}. Therefore, the number of summands can be limited to the value of \mbox{\small{$K$}} where \mbox{\small{$S_{K+1}-S_{K}<\epsilon$}}. It is found that  the \mbox{\small{$S_K$}} converges with insignificant deviation as the value of \mbox{\small{$K$}} attains 10.  This aids to the straightforward numerical evaluation of the CovP for a CUE.
\subsection{Evaluation of activity factor and BlocP}
\label{sec:TrafficModeling_Shared}
In this section, we derive the activity factor of an MBS for given real-time service. The activity factor is the probability that an MBS accesses a typical channel.
The probability of users being cell center and cell edge is \mbox{\small{$R^2$}} and \mbox{\small{$1-R^2$}} (refer \eqref{eq:ProbBeingCCU} and \eqref{eq:ProbBeingCEU}), respectively. Moreover, the call arrival locations are also independent. Therefore, we can split the call arrival process of rate \mbox{\small{$\lambda_M$}} into two independent arrival processes (cell center and cell edge) with rate \mbox{\small{$\lambda_c=\lambda_MR^2$}} and \mbox{\small{$\lambda_e=\lambda_M(1-R^2)$}}, respectively.  Furthermore, the realized service processes for cell center traffic and cell edge traffic become decoupled, as a dedicated set of channels are allocated. This cause different utilization of cell center and cell edge bands. In other words, an MBS has different activity factors for cell center and cell edge bands. Considering the Shannon capacity, the required number of channels become $n=R_{\text{th}}/(B\log_2(1+\Gamma))$ to meet the service rate $R_\text{th}$ with SIR $\Gamma$. Hence, expected number of required number of channels CCUs ($\bar N_c$) and CEUs ($\bar N_e$) using modulation and coding schemes (MCSs) employed for downlink transmission with the SIR thresholds $\Gamma_i$ for  $i=1\dots T$, can be evaluated as follows
  \begingroup\makeatletter\def\f@size{8}\check@mathfonts
\begin{align}
\setcounter{equation}{10}
 \bar N_c=\sum\limits_{i=1}^{T}\frac{R_{\text{th}}}{B\log_2\left(1+\Gamma_i\right)}&\mathbb{P}_{\text{\mbox{\tiny{SC}}}}(\Gamma_i) \text{~~~and} \label{eq:AVgNumberOfChannelPerMacroService_CCU}\\
 \bar N_e=\sum\limits_{i=1}^{T}\frac{R_{\text{th}}}{B\log_2\left(1+\Gamma_i\right)}&\mathbb{P}_{\text{\mbox{\tiny{SE}}}} (\Gamma_i), \label{eq:AVgNumberOfChannelPerMacroService_CEU}  
\end{align}\endgroup
where\begingroup\makeatletter\def\f@size{7.5}\check@mathfonts
\begin{equation}
 \mathbb{P}_{\text{\mbox{\tiny{SC}}}}(\Gamma_i) =\frac{[\mathcal{C}_{\text{\mbox{\tiny{SC}}}}\left(\Gamma_{i}\right)-\mathcal{C}_{\text{\mbox{\tiny{SC}}}}\left(\Gamma_{i+1}\right)]}{[1-\mathcal{C}_{\text{\mbox{\tiny{SC}}}}\left(\Gamma_{1}\right)]} \text{~and~}  
 \mathbb{P}_{\text{\mbox{\tiny{SE}}}}(\Gamma_i)=\frac{[\mathcal{C}_{\text{\mbox{\tiny{SE}}}}\left(\Gamma_{i}\right)-\mathcal{C}_{\text{\mbox{\tiny{SE}}}}\left(\Gamma_{i+1}\right)]}{[1-\mathcal{C}_{\text{\mbox{\tiny{SE}}}}\left(\Gamma_{1}\right)]}\nonumber
\end{equation}\endgroup
such that \mbox{\small{$\mathbb{P}_{\text{\mbox{\tiny{SC}}}}(\Gamma_i)$}} and \mbox{\small{$\mathbb{P}_{\text{\mbox{\tiny{SE}}}}(\Gamma_i)$}} represents the probability of a CCU and a CEU is served using $i$-th MCS, and \mbox{\small{$\Gamma_{T+1}=\infty$}}.  

The service process is modeled using the STPPP such that 
\begin{itemize}
    \item Number of service arrives in disjoint set are independent.
    \item Number of service arrives in a set $a$ are Poisson distributed with parameter $|a|\lambda_M$.
    \item A service arrival in a set $a$ follows uniform distribution $\frac{1}{|a|}$. 
    \item A service stay for exponentially distributed time with parameter $\frac{1}{\mu}$.  
\end{itemize}
With slight abuse of notation here onwards we use $a$ instead of $|a|$ to represents the area of the set $a$.
Furthermore, the time invariant MBSs are modeled using PPP with density $\lambda_B$. This form the static non-overlapping cells whose shape are defined by voronoi tessellation. This implies that the service process of each cell can be independently modeled. According to STPPP a cell of area $a$ has Poisson service arrival with parameter $a\lambda_M$. Therefore, employing the Little's law the traffic intensity of a cell of area $a$ become $\frac{a\lambda_M}{\mu}$ as the service rate is $\mu$. Hence, the arrival and departure process of services in cell follows memoryless property. The thinning of service arrival with probability $R^2$ will result into cell center and cell edge service process which inherently follows the memoryless property. 
The arrival rates of CCUs' and CEUs' services in a cell of area $a$ are $a\lambda_MR^2$ and $a\lambda_M(1-R^2)$, respectively.  The equivalent number of servers reserved for cell center services and cell edge service at each MBS are $N_c$ and $N_e$ respectively such that $N_c=\lfloor Np_m/\bar N_c\rfloor$ and $N_e=\lfloor N(1-p_m)/\bar N_e\rfloor$. Furthermore, the service are assumed to be admitted with zero waiting time policy.   Therefore, these two independent processes of a macro cell of area $a$ can be modeled using $M/M/N_c/N_c$ and $M/M/N_e/N_e$ queues with arrival rate $a\lambda_c$ and $a\lambda_e$, respectively, and each of departure rate $\mu$. 
Fig. \ref{fig:DecoupledQueue} depicts the decoupled queues. 
\begin{figure}[htp]
\centering
\includegraphics[trim=4cm 10cm 4cm 5.5cm, width=.45\textwidth]{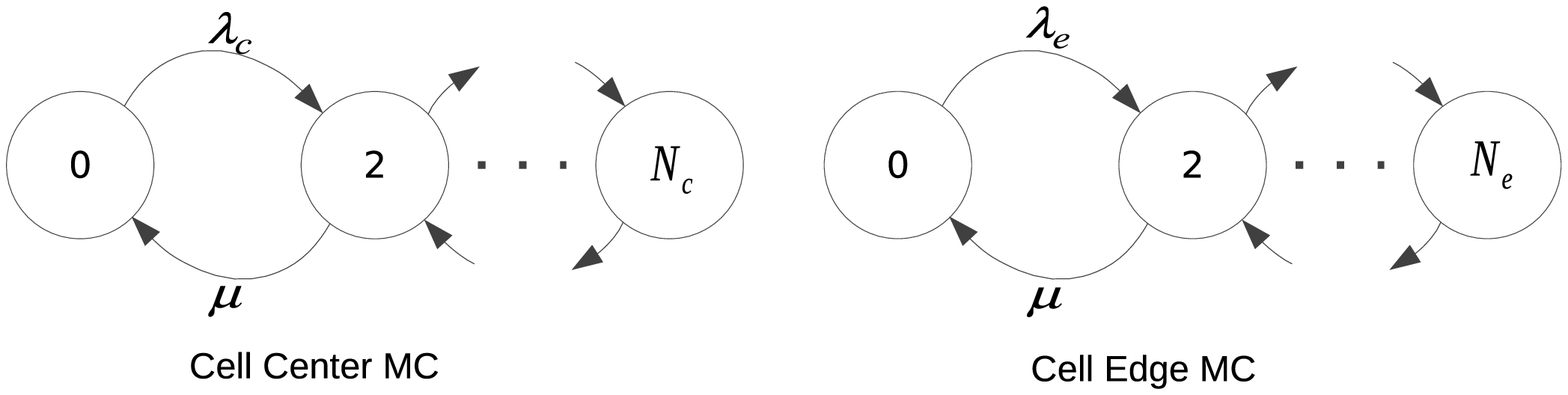} 
\caption{Cell center/edge MC under SSA.}
\label{fig:DecoupledQueue}
\end{figure}
Therefore, the probability that  \mbox{\small{$n$}} servers in the cell center band are occupied is \mbox{\small{$\mathbb{P}\left(n|a\right)=\frac{\left({a\lambda_{c}}/{\mu}\right)^n}{n!}/\sum_{k=0}^{N_c}\frac{\left({a\lambda_{c}}/{\mu}\right)^i}{i!}$}} \cite{Bertsekas:1992}. 
The probability that the MBS uses one server out of $N_c$ cell center servers for a given cell area $a$ is derived in \cite{Wei_Bao_2014_NearOptimal} as follows\begingroup\makeatletter\def\f@size{8}\check@mathfonts
\begin{align}
\label{eq:Channel_Access_Probability_conditional}
\zeta_\text{SC}\left(a\right)&=\frac{1}{N_c}\sum\limits_{n=0}^{N_{c}}n\mathbb{P}\left(n|a\right)\approx\begin{cases}
                                                                                         &\frac{a\lambda_c}{\mu N_c} \mspace{17mu} \text{if}~ \frac{a\lambda_c}{\mu N_c}<1,\\
                                                                                         &1 \mspace{40mu} \text{if}~ \frac{a\lambda_c}{\mu N_c}\geq 1.
                                                                                        \end{cases}
\end{align}\endgroup
The approximation is tighter for larger values of \mbox{\small{$N_c$}}. For larger \mbox{\small{$N$}}, \mbox{\small{$N_c\approx {Np_m}/{\bar N_c}$}}. 
The cell area probability density function is \mbox{\small{$f_A\left(a\right)=\frac{\left({3.5\lambda_B}\right)^{3.5}}{\Gamma\left(3.5\right)}a^{2.5}\exp\left(-3.5a\lambda_B\right)$}} \cite{Singh_2013}. 
Therefore, the activity factor of an MBS for cell center server can be written as we have proven in \cite{Praful_BlockingProb} as follows 
\begingroup\makeatletter\def\f@size{8}\check@mathfonts
 \begin{align}
 \zeta_\text{SC}&=\int_0^\infty \zeta_\text{SC}\left(a\right)f_A\left(a\right)da\nonumber\\
  &=\frac{{\lambda_c\bar N_c}}{3.5{\lambda_B\mu Np_m}\Gamma(3.5)}\gamma\left(4.5,3.5\frac{\lambda_B\mu Np_m}{\lambda_c\bar N_c}\right)\nonumber\\
  &~~~~~~~~~~~~~+\frac{1}{\Gamma(3.5)}\Gamma\left(3.5,3.5\frac{\lambda_B\mu Np_m}{\lambda_c\bar N_c}\right),
  \label{eq:ActivityFactorCCU}
 \end{align}\endgroup
where \mbox{\small{$\gamma(\cdot,\cdot)$}} and \mbox{\small{$\Gamma(\cdot,\cdot)$}} are lower and upper incomplete gamma functions. Similarly, we can derive the activity factor of an MBS for cell edge server as   
\begingroup\makeatletter\def\f@size{8}\check@mathfonts
 \begin{align}
  \zeta_\text{SE}&=\frac{{\lambda_e\bar N_e}}{3.5{\lambda_B\mu N(1-p_m)}\Gamma(3.5)}\gamma\left(4.5,3.5\frac{\lambda_B\mu N(1-p_m)}{\lambda_e\bar N_e}\right)\nonumber\\
  &~~~~~~~~~~~~~~~~~~~+\frac{1}{\Gamma(3.5)}\Gamma\left(3.5,3.5\frac{\lambda_B\mu N(1-p_m)}{\lambda_e\bar N_e}\right).
  \label{eq:ActivityFactorCEU}
 \end{align}\endgroup
 From \eqref{eq:ActivityFactorCCU} and \eqref{eq:ActivityFactorCEU}, it is evident that the activity factors \mbox{\small{$\zeta_\text{SC}$}} and \mbox{\small{$\zeta_\text{SE}$}} are coupled with CovP of CCU and CEU, respectively. {Note that the probability of a server busy is equivalent to the probability of a channel busy. In other words, activity factors of server and channel are equivalent.}  The above expressions are difficult to solve analytically for  \mbox{\small{$\zeta_\text{SC}$}} and \mbox{\small{$\zeta_\text{SE}$}}. Therefore, we evaluate $\zeta_\text{SC}$ and $\zeta_\text{SE}$, recursively, using the bisection method. {The approximated values of \mbox{\small{$\zeta_\text{SC}$}}  and \mbox{\small{$\zeta_\text{SE}$}} are further substituted in \eqref{eq:CovPCC_Shared} and \eqref{eq:CovPEC_Shared1} to evaluate the approximated CovP of a CCU and a CEU, respectively, for a given macro tier fractional load condition.}
 
 Furthermore referring to Fig. \ref{fig:DecoupledQueue}, the BlocP of a CCU and a CEU for a given cell area $a$ can be written \cite{Kleinrock} as\begingroup\makeatletter\def\f@size{8}\check@mathfonts
 \begin{align}
  \mathcal{B}_{\text{\mbox{\tiny{SC}}}}\left(a\right)&=\frac{\left({a\lambda_{c}}/{\mu}\right)^{N_c}}{N_c!}/\sum_{k=0}^{N_c}\frac{\left({a\lambda_{c}}/{\mu}\right)^k}{k!}\text{ and}\label{eq:Blocking-CCU-Shared}\\
  \mathcal{B}_{\text{\mbox{\tiny{SE}}}}\left(a\right)&=\frac{\left({a\lambda_{e}}/{\mu}\right)^{N_e}}{N_e!}/\sum_{k=0}^{N_e}\frac{\left({a\lambda_{e}}/{\mu}\right)^k}{k!},\label{eq:Blocking-CEU-Shared}
  \end{align}\endgroup
respectively. Therefore, the network BlocP becomes
\begingroup\makeatletter\def\f@size{8}\check@mathfonts
\begin{equation}
\label{eq:blocking-probability-shared}
\mathcal{B}_{\text{\mbox{\tiny{S}}}}=\int_0^{\infty}\left(R^2\mathcal{B}_{\text{\mbox{\tiny{SC}}}}\left(a\right)+(1-R^2)\mathcal{B}_{\text{\mbox{\tiny{SE}}}}\left(a\right)\right)f_A\left(a\right)da.
\end{equation}\endgroup
Note that we evaluate the above integral numerically as there exist no closed-form solution to it.
\section{Performance Analysis under Co-channel Spectrum allocation (CSA)}
\label{sec:CoverageAnalysis-Co-Channel}
In this section, we provide the coverage analysis of macro users under CSA for given fractional load conditions. 
In CSA, FAPs and macro users (CCUs/CEUs) are entitled to access any channel from the set \mbox{\small{$\mathcal{N}$}}.  In following subsections, first we provide the coverage analysis for a CCU and a CEU under CSA. Next we present the framework  of modeling cell edge and cell center services using two-dimensional Markov chain for evaluation of activity factor and BlocP. 
\subsection{CovP analysis}
In CSA, the activity of FAPs per channel is reduced by the factor of \mbox{\small{$\frac{1}{p_m}$}} compared to SSA as full bandwidth access is allowed. Therefore, the thinned density of FAPs per channel becomes \mbox{\small{$\tilde\lambda_F=\frac{\lambda_F}{N}$}}. This relaxes the inter-tier interference for the CCUs. However, the CSA exposes the CEUs to the inter-tier interference. Since macro users access channels from the set \mbox{\small{$\mathcal{N}$}}, the MBS has single activity factor \mbox{\small{$\zeta_\text{C}$}}. Therefore, the density of interfering MBSs becomes \mbox{\small{$\tilde\lambda_B=\lambda_B\zeta_\text{C}$}}. The CovPs of a CCU and a CEU are provided in following theorems. 
\begin{corollary}
\label{theorem:CCU_co-channel}
 The CovP of a CCU under CSA is given by\begingroup\makeatletter\def\f@size{8}\check@mathfonts
 \begin{align}
  \mathcal{C}_{\text{\mbox{\tiny{CC}}}}(\beta)&=\left[1+\zeta_\text{C} R^2\mathcal{H}(\beta,\delta,R)+\pi\delta R^2\frac{\tilde\lambda_F}{\lambda_B}(\beta\tilde{P}_f)^{\delta}\csc\left[\pi\delta\right]\right]^{-1},
  \label{eq:CovPCC_Co-channel}
 \end{align}\endgroup
where $\tilde\lambda_F=\frac{\lambda_F}{N}$. 
\end{corollary}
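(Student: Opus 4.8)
The plan is to obtain Corollary~\ref{theorem:CCU_co-channel} as an immediate consequence of Theorem~\ref{theorem:CCU_shared}, exploiting the fact that the signal model seen by a CCU is structurally identical under SSA and CSA. First I would observe that a CCU experiences SIR $\gamma_c(r)=\frac{h_0r^{-\alpha}}{I_c(r)+I_f}$ in both schemes: the desired signal $h_0r^{-\alpha}$, the co-channel macro interference $I_c(r)$, and the co-channel femto interference $I_f$ appear in exactly the same functional form. Because $h_0$ is exponential, the coverage probability factorizes as $\mathcal{C}_{\text{\mbox{\tiny{SC}}}}(\beta)=\mathbb{E}_{r_c}\!\left[\mathcal{L}_{I_c}(\beta r_c^{\alpha},r_c)\,\mathcal{L}_{I_f}(\beta r_c^{\alpha})\right]$ after conditioning on the serving distance $r_c$ and using the independence of the macro and femto tiers. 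Hence the entire derivation in Appendix~\ref{app:theorem_1} carries over verbatim once the two CSA-specific parameter changes are inserted.

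The key step is to identify precisely those two changes and to argue that nothing else in the geometry is affected. The CCU classification is defined purely through the ratio $R_m/R_d\le R$ of the two nearest MBS distances (Section~\ref{sec:UserClassification}), and this definition does not reference the spectrum-allocation rule. Consequently the exclusion argument of Lemma~\ref{lemma:LT_MBS_CCU}---that the dominant interfering MBS must lie beyond $\frac{r_c}{R}$, so the co-channel MBS process has zero intensity on $\mathcal{B}(0,\frac{r_c}{R})$---remains valid under CSA. The only substitutions are: (i) in the macro-tier Laplace transform, the thinning factor $\zeta_\text{SC}$ is replaced by the single CSA activity factor $\zeta_\text{C}$, since under CSA a macro user draws from the whole set $\mathcal{N}$ and the MBS therefore has one activity factor rather than separate cell-center/edge factors; and (ii) in the femto-tier Laplace transform of Lemma~\ref{lemma:LT_FAP}, the effective co-channel FAP density changes from $\tilde\lambda_F=\frac{\lambda_F}{p_mN}$ to $\tilde\lambda_F=\frac{\lambda_F}{N}$, because full-band FAP access under CSA lowers the per-channel femto activity by the factor $p_m$.

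I would then simply perform these substitutions in the closed form of Theorem~\ref{theorem:CCU_shared}. Writing $\zeta_\text{C}$ in place of $\zeta_\text{SC}$ in the macro-interference term $\zeta_\text{SC}R^2\mathcal{H}(\beta,\delta,R)$, and retaining the femto term $\pi\delta R^2\frac{\tilde\lambda_F}{\lambda_B}(\beta\tilde{P}_f)^\delta\csc[\pi\delta]$ now evaluated with $\tilde\lambda_F=\frac{\lambda_F}{N}$, yields exactly \eqref{eq:CovPCC_Co-channel}. I do not anticipate a genuine technical obstacle, since the result is a corollary; the only point requiring care is the justification that the CCU interference geometry is invariant to the spectrum-allocation policy, so that the two parameter substitutions are the sole modifications and no new exclusion region or additional interfering tier is introduced. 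The determination of the numerical value of $\zeta_\text{C}$ itself (via a two-dimensional Markov chain, since cell-center and cell-edge traffic now share the band) is deferred to the activity-factor subsection and is not needed for the CovP expression given $\zeta_\text{C}$.
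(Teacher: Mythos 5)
Your proposal is correct and follows essentially the same route as the paper, whose one-line proof reads ``Setting $p_m=1$, $\zeta_\text{SC}=\zeta_\text{C}$, and following Theorem~1 completes the proof'' --- your substitutions $\zeta_\text{SC}\to\zeta_\text{C}$ and $\tilde\lambda_F=\frac{\lambda_F}{p_mN}\to\frac{\lambda_F}{N}$ are exactly equivalent to this. Your added justification that the CCU classification and the exclusion region $\mathcal{B}(0,\frac{r_c}{R})$ are invariant to the spectrum-allocation policy makes explicit what the paper leaves implicit, but it is the same argument.
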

\begin{proof}
 Setting \mbox{\small{$p_m=1$}}, \mbox{\small{$\zeta_\text{SC}=\zeta_\text{C}$}}, and following Theorem \ref{theorem:CCU_shared} completes the proof. 
\end{proof}
\begin{corollary}
\label{theorem:CEU_co-channel}
 The CovP of a CEU under CSA is given by \mbox{\small{$\mathcal{C}_{\text{\mbox{\tiny{CE}}}}(\beta)=\mathcal{C}_{\text{\mbox{\tiny{SE}}}}(\beta)$}} (see \eqref{eq:CovPEC_Shared1})
 \text{where}~\mbox{\small{$\zeta_\text{SE}=\zeta_\text{C}$}}, \mbox{\small{$\tilde\lambda_F=\frac{\lambda_F}{N}$}},
 \newline \mbox{\small{$\mathcal{G}(\beta,\delta,R)=\mathcal{H}(\beta,\delta,1)-\mathcal{H}(\beta,\delta,R)$}}, 
 \newline\mbox{\small{$\mathcal{H}(\beta,\delta,R)=\beta^\delta\int\nolimits_{\frac{1}{R^2\beta^\delta}}^\infty\frac{dv}{1+v^{\frac{1}{\delta}}}+\delta\pi\frac{\tilde\lambda_F}{\zeta_\text{C}\lambda_B}(\beta\tilde P_F)^{\delta}\csc[\pi\delta]$}}.
\end{corollary}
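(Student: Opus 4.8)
The plan is to reuse the derivation of Theorem~\ref{theorem:CEU_shared} essentially verbatim and merely inject the femto-tier interference that CSA re-exposes to a CEU. First I would record the only structural change relative to SSA: under CSA the CEU SIR carries $I_f$ in its denominator, so the CovP conditioned on the serving distance $r_e$ is $\mathbb{E}[\exp(-\beta r_e^\alpha(I_e(r_e)+I_f))]$ (recall $h_0$ is unit-mean exponential). Because $\Phi_B$ and $\Phi_F$ are independent, this conditional expectation factorizes into $\mathcal{L}_{I_e}(\beta r_e^\alpha,r_e)\,\mathcal{L}_{I_f}(\beta r_e^\alpha)$, so the sole new ingredient is the multiplicative factor $\mathcal{L}_{I_f}(\beta r_e^\alpha)$ appearing inside each $r_e$-integral of Appendix~\ref{app:theorem_2}.

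Next I would cast this factor into a form matching those integrals. Substituting $s=\beta r_e^\alpha$ into Lemma~\ref{lemma:LT_FAP} and using $\alpha\delta=2$ collapses the femto Laplace transform to $\mathcal{L}_{I_f}(\beta r_e^\alpha)=\exp(-C r_e^2)$, a pure $\exp(-\mathrm{const}\cdot r_e^2)$ term whose coefficient $C$ is proportional to $\tilde\lambda_F(\beta\tilde P_F)^\delta\csc[\pi\delta]$ with $\tilde\lambda_F=\lambda_F/N$. The key point is that this is the very same $r_e^2$-dependence already carried by the intra-tier transforms of Lemmas~\ref{lemma_3} and \ref{lemma_4}, where the MBS contribution enters as $\exp(-\pi\zeta_\text{SE}\lambda_B r_e^2\,\mathcal{H}(\beta,\delta,\cdot))$. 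Every $r_e$-integral in the proof of Theorem~\ref{theorem:CEU_shared} is therefore of the type $\int_0^\infty e^{-\pi\lambda_B(\,\cdot\,)r_e^2}\,\mathrm{d}(r_e^2)$, and multiplying the integrand by $\exp(-C r_e^2)$ simply shifts the bracketed coefficient by $C/(\pi\lambda_B)$.

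I would then identify this shift with the corollary's redefinition of $\mathcal{H}$. Setting $\zeta_\text{SE}=\zeta_\text{C}$ (a CEU under CSA sees one MBS activity factor), the quantity $C/(\pi\lambda_B)$ equals $\zeta_\text{C}$ times the femto term $\delta\pi\tfrac{\tilde\lambda_F}{\zeta_\text{C}\lambda_B}(\beta\tilde P_F)^\delta\csc[\pi\delta]$ appended to $\mathcal{H}$; this is exactly the same femto contribution that already appears (scaled by $R^2$) in the CCU result \eqref{eq:CovPCC_Co-channel}, which pins down $C$ without ambiguity. Hence wherever $\zeta_\text{SE}\mathcal{H}(\beta,\delta,\cdot)$ occurs in \eqref{eq:CovPEC_Shared1} the replacement $\mathcal{H}\mapsto\mathcal{H}+\delta\pi\tfrac{\tilde\lambda_F}{\zeta_\text{C}\lambda_B}(\beta\tilde P_F)^\delta\csc[\pi\delta]$ reproduces the femto factor. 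I would also check that $\mathcal{G}$ is untouched: the femto term does not depend on $R$, so it enters $\mathcal{H}(\beta,\delta,1)$ and $\mathcal{H}(\beta,\delta,R)$ identically and cancels in $\mathcal{G}=\mathcal{H}(\beta,\delta,1)-\mathcal{H}(\beta,\delta,R)$, consistent with the corollary keeping $\mathcal{G}$ femto-free.

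The main obstacle is not the branch where the dominant interferer is silent (Lemma~\ref{lemma_4}), where the shift is immediate, but the dominant-transmitting branch (Lemma~\ref{lemma_3}), whose conditional normalization $1/(1-e^{-cr_e^2})$ and annular-ring split generate the infinite inclusion--exclusion sum over $n$ in \eqref{eq:CovPEC_Shared1}. I would have to confirm that the extra $\exp(-C r_e^2)$ threads cleanly through every summand, shifting only the $\mathcal{H}(\beta,\delta,R)$ coefficient while leaving the ring term $\mathcal{G}$, the $(n+k)\zeta_\text{SE}(1-R^2)$ terms and the $R^{2l}$ terms intact, so that the net effect across the whole series is just the single replacement above. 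With that verified, the co-channel reductions $\tilde\lambda_F=\lambda_F/N$ and $\zeta_\text{SE}=\zeta_\text{C}$ give $\mathcal{C}_{\text{\mbox{\tiny{CE}}}}(\beta)=\mathcal{C}_{\text{\mbox{\tiny{SE}}}}(\beta)$ with the stated $\mathcal{H}$, completing the proof.
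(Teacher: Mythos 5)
Your proposal is correct and follows the paper's own route: the paper likewise writes $\mathcal{C}_{\text{\mbox{\tiny{CE}}}}(\beta,r_e)=\zeta_\text{C}\mathcal{L}_{I_{e}}^+(s,r_e)\mathcal{L}_{I_f}(s)+(1-\zeta_\text{C})\mathcal{L}_{I_e}(s,r_e)\mathcal{L}_{I_f}(s)$ at $s=\beta r_e^\alpha$, integrates against $f_{R_e}$ from \eqref{eq:Re_distribution}, and then ``follows the procedure of Theorem \ref{theorem:CEU_shared}.'' Your write-up in fact makes explicit what the paper leaves implicit --- that $\mathcal{L}_{I_f}(\beta r_e^\alpha)$ collapses to a pure $\exp(-Cr_e^2)$ factor that uniformly shifts the coefficient in every Gaussian-in-$r_e^2$ summand (including those generated by the geometric expansion of the $1/(1-e^{-cr_e^2})$ normalization in Lemma \ref{lemma_3}), so the net effect is exactly the stated redefinition of $\mathcal{H}$ with $\mathcal{G}$ untouched --- and your pinning of $C$ against \eqref{eq:CovPCC_Co-channel} is consistent with the paper's conventions.
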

\begin{proof}
    The CovP evaluation of a CEU under CSA includes the presence of inter-tier interference in addition to the intra-tier along with the conditional transmission of the dominant interfering MBS.
    Therefore, similar to \eqref{eq:CovP_CC_rc}, the CovP of a CEU at distance \mbox{\small{$r_e$}} from the serving MBS can be written as\begingroup\makeatletter\def\f@size{8}\check@mathfonts
 \begin{align}
  \mathcal{C}_{\text{\mbox{\tiny{CE}}}}(\beta,r_e)&=\zeta_\text{C}\mathcal{L}_{I_{e}}^+(s,r_e)\mathcal{L}_{I_f}\left(s\right)\big |_{s=\beta r_e^\alpha}\nonumber\\
  &+(1-\zeta_\text{C})\mathcal{L}_{I_e}(s,r_e)\mathcal{L}_{I_f}\left(s\right)\big |_{s=\beta r_e^\alpha}.
  \label{eq:CovP_CEU_re_co-channel}
 \end{align}\endgroup
 Therefore, the CovP of a typical CEU can be written as\begingroup\makeatletter\def\f@size{8}\check@mathfonts
 \begin{align}
  \mathcal{C}_{\text{\mbox{\tiny{CE}}}}(\beta)&=\int_{0}^{\infty}\mathcal{C}_{\text{\mbox{\tiny{CE}}}}(\beta,r_e)f_{R_e}(r_e)dr_e\nonumber\\
  \begin{split}
  &=\int_{0}^{\infty}\left[\zeta_\text{C}\mathcal{L}_{I_{e}}^+(\beta r_e^\alpha,r_e)\mathcal{L}_{I_f}\left(s\right)\right.\\
  &\left.~~~~~~~+(1-\zeta_\text{C})\mathcal{L}_{I_e}(\beta r_e^\alpha,r_e)\mathcal{L}_{I_f}\left(s\right)\right]f_{R_e}(r_e)dr_e.
  \end{split}
  \label{eq:CovP_CEU1_co-channel}
 \end{align}\endgroup
 Further, substituting \eqref{eq:LT_FAP}, \eqref{eq:Laplace_MBSInterference_CEU_Ie+}, \eqref{eq:Laplace_MBSInterference_CEU_Ie}, and \eqref{eq:Re_distribution} in \eqref{eq:CovP_CEU1_co-channel} and further following the procedure of Theorem \ref{theorem:CEU_shared} completes the proof.
\end{proof}
\subsection{Evaluation of activity factor and BlocP}
In this section, we discuss the modeling of the macro tier service under CSA. We categorize the traffic in two region-wise classes with call arrival rate of \mbox{\small{$\lambda_c=R^2\lambda_M$}} and \mbox{\small{$\lambda_e=(1-R^2)\lambda_M$}}, as the mean number of channels required by the macro users in cell center region (\mbox{\small{$\bar N_c$}}) and in cell edge region (\mbox{\small{$\bar N_e$}}) are different. Since, CCUs and CEUs access channels from the set \mbox{\small{$\mathcal{N}$}}, we can model the macro cell service using  two dimensional Markov chain (\mbox{\small{$2$-$\mathbb{D}$}} MC). A typical example of \mbox{\small{$2$-$\mathbb{D}$}} MC is shown in Fig. \ref{fig:CoupledQueue} assuming \mbox{\small{$\bar N_c=1$}}, \mbox{\small{$\bar N_e=2$}} and \mbox{\small{$N$}} even. 
Similar to \eqref{eq:AVgNumberOfChannelPerMacroService_CCU} and \eqref{eq:AVgNumberOfChannelPerMacroService_CEU}, the \mbox{\small{$\bar N_c$}} and \mbox{\small{$\bar N_e$}} respectively under CSA can be evaluated as follows
\begingroup\makeatletter\def\f@size{8}\check@mathfonts
\begin{align}
 \bar N_c=\sum\limits_{i=1}^{T}\frac{R_{\text{th}}}{B\log_2\left(1+\Gamma_i\right)}&\mathbb{P}_{\text{\mbox{\tiny{CC}}}}(\Gamma_i) \text{  and} \label{eq:AVgNumberOfChannelPerMacroService_CoChannel_CCU}\\
 \bar N_e=\sum\limits_{i=1}^{T}\frac{R_{\text{th}}}{B\log_2\left(1+\Gamma_i\right)}&\mathbb{P}_{\text{\mbox{\tiny{CE}}}} (\Gamma_i), \label{eq:AVgNumberOfChannelPerMacroService_CoChannel_CEU}  
\end{align}\endgroup
where\begingroup\makeatletter\def\f@size{7.5}\check@mathfonts
\begin{equation}
 \mathbb{P}_{\text{\mbox{\tiny{CC}}}}(\Gamma_i) =\frac{[\mathcal{C}_{\text{\mbox{\tiny{CC}}}}\left(\Gamma_{i}\right)-\mathcal{C}_{\text{\mbox{\tiny{CC}}}}\left(\Gamma_{i+1}\right)]}{[1-\mathcal{C}_{\text{\mbox{\tiny{CC}}}}\left(\Gamma_{1}\right)]} \text{~and~}  
 \mathbb{P}_{\text{\mbox{\tiny{CE}}}}(\Gamma_i)=\frac{[\mathcal{C}_{\text{\mbox{\tiny{CE}}}}\left(\Gamma_{i}\right)-\mathcal{C}_{\text{\mbox{\tiny{CE}}}}\left(\Gamma_{i+1}\right)]}{[1-\mathcal{C}_{\text{\mbox{\tiny{CE}}}}\left(\Gamma_{1}\right)]}.\nonumber
\end{equation}\endgroup

\begin{figure}[htp]
\centering
\includegraphics[trim=1cm 2.5cm 5cm 1.75cm, width=.5\textwidth]{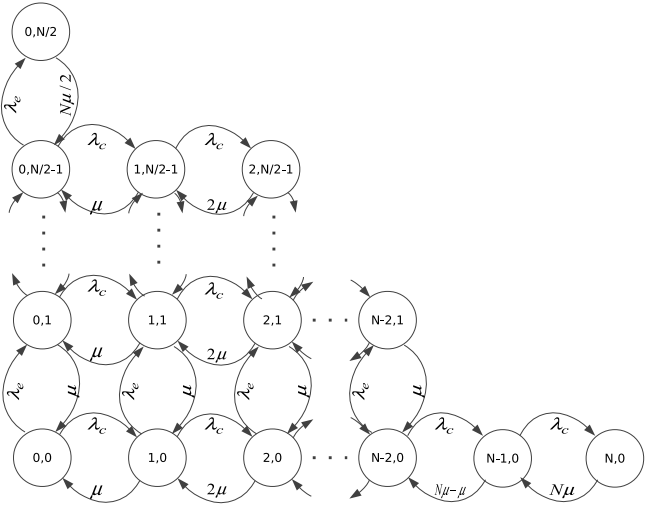} 
\caption{\mbox{\small{$2$-$\mathbb{D}$}} Markov chain under CSA.}
\label{fig:CoupledQueue}
\end{figure}
The activity factor of the cell of area $a$ can be evaluated using the state probabilities of \mbox{\small{$2$-$\mathbb{D}$}} MC. The exact channel access probability and BlocP of 2-\mbox{\small{$\mathbb{D}$}} MC modeled system can be recursively determined using Kaufman-Roberts algorithm \cite{Kaufman,Roberts}.
However, employing Erlang loss model \cite{Karray_2010} we have approximated 2-\mbox{\small{$\mathbb{D}$}} MC with 1-\mbox{\small{$\mathbb{D}$}} MC to bring analytical tractability in the evaluation of activity factor. In Erlang loss model the servers required per service are averaged over the number of servers required per class individually weighted with its traffic intensity. Therefore, the number of servers required per service in the approximated model becomes\begingroup\makeatletter\def\f@size{8}\check@mathfonts
\begin{equation}
 \bar N=\frac{1}{\lambda_M}\left[\lambda_c \bar N_c +\lambda_e \bar N_e\right].
 \label{eq:AVgNumberOfChannelPerMacroService_co-channe}
\end{equation}\endgroup
In this way, the effective call arrival rate in a cell of area $a$ and the effective number of servers of the approximated queue becomes \mbox{\small{$\lambda_Ma=\lambda_ca+\lambda_ea$}} and \mbox{\small{$N_{\text{eff}}=\left\lfloor{N}/{\bar N}\right\rfloor$}}, respectively. 
Similar to \eqref{eq:ActivityFactorCCU} and \eqref{eq:ActivityFactorCEU}, the activity factor of the approximated queue can be written as 
\begingroup\makeatletter\def\f@size{8}\check@mathfonts
 \begin{align}
 \zeta_\text{C}=\frac{{\lambda_M\bar N}}{3.5{\lambda_B\mu N}\Gamma(3.5)}&\gamma\left(4.5,3.5\frac{\lambda_B\mu N}{\lambda_M\bar N}\right)  \nonumber\\
 +\frac{1}{\Gamma(3.5)}&\Gamma\left(3.5,3.5\frac{\lambda_B\mu N}{\lambda_M\bar N}\right). 
  \label{eq:ActivityFactor_co-channel}
 \end{align}\endgroup 
 
It is clear that the CovPs of CCU and CEU (refer Corollary \ref{theorem:CCU_co-channel} and \ref{theorem:CEU_co-channel}) are coupled with the activity factor \eqref{eq:ActivityFactor_co-channel} via \eqref{eq:AVgNumberOfChannelPerMacroService_co-channe}. 
The above expression is difficult to solve analytically for the activity factor \mbox{\small{$\zeta_\text{C}$}}. Therefore, we evaluate \mbox{\small{$\zeta_\text{C}$}} recursively using the bisection method. The approximated value of \mbox{\small{$\zeta_\text{C}$}} is used further to evaluate the approximated CovP of a CCU and a CEU, respectively, for a given macro tier fractional load condition. 

In order to evaluate precise blocking portability we use two dimensional MC (refer Fig. \ref{fig:CoupledQueue}) instead of using Erlang loss model. For given cell area $a$, the traffic intensities of these two classes are \mbox{\small{$\rho_ca=\frac{\lambda_MaR^2}{\mu}$}} and \mbox{\small{$\rho_ea=\frac{\lambda_Ma(1-R^2)}{\mu}$}}. Let \mbox{\small{$\mathbf{n}=[\bar N_c,  \bar N_e]^\top$}} where \mbox{\small{$\bar N_c$}} and \mbox{\small{$\bar N_e$}} is evaluated from \eqref{eq:AVgNumberOfChannelPerMacroService_CoChannel_CCU} and \eqref{eq:AVgNumberOfChannelPerMacroService_CoChannel_CEU} using obtained CovPs of CCU and CEU for given macro tier load. Let \mbox{\small{$\mathcal{S}=\{\mathbf{s}=[s_c,s_e]|\mathbf{s\cdot n}\leq N\}$}} be the possible states of the MC where $s_c$ and $s_e$ denote the number of CCUs and CEUs. For given cell area $a$, the probability of state $\mathbf{s}$ is given by \cite{KeithRoss}\begingroup\makeatletter\def\f@size{8}\check@mathfonts
\begin{align}
 \mathbf{\pi}\left(\mathbf{s}|a\right)&=\frac{1}{G}\frac{(\rho_c a)^{s_c}(\rho_e a)^{s_e}}{s_c!s_e!}, \text{~~where~~} 
 G&=\sum\limits_{\mathbf{s}\in\mathcal{S}}\frac{(\rho_c a)^{s_c}(\rho_e a)^{s_e}}{s_c!s_e!}.
\end{align}\endgroup
Let \mbox{\small{$\mathcal{S}_c$}} (\mbox{\small{$\mathcal{S}_e$}}) be the set of states in which the arriving CCU (CEU) class is admitted, i.e. \mbox{\small{$ \mathcal{S}_c=\left\{\mathbf{s}\in \mathcal{S}:\mathbf{s}\cdot\mathbf{n}\leq N-\bar N_c \right\}$}}  and \mbox{\small{$ \mathcal{S}_e=\left\{\mathbf{s}\in \mathcal{S}:\mathbf{s}\cdot\mathbf{n}\leq N-\bar N_e \right\}$}}. Therefore, the BlocP of CCU and CEU can be written as \begingroup\makeatletter\def\f@size{8}\check@mathfonts
\begin{align}
 \mathcal{B}_{\text{\mbox{\tiny{CC}}}}(a)&=1-\sum\limits_{\mathbf{s}\in\mathcal{S}_c}\mathbf{\pi}\left(\mathbf{s}|a\right) \text{ and }\label{eq:Blocking-CCU-Co-channel}\\
 \mathcal{B}_{\text{\mbox{\tiny{CE}}}}(a)&=1-\sum\limits_{\mathbf{s}\in\mathcal{S}_e}\mathbf{\pi}\left(\mathbf{s}|a\right), \label{eq:Blocking-CEU-Co-channel}
\end{align}\endgroup
respectively, for a given cell area $a$. Therefore, the network BlocP becomes
\begingroup\makeatletter\def\f@size{8}\check@mathfonts
\begin{equation}
\label{eq:blocking-probability-co-channel}
\mathcal{B}_{\text{\mbox{\tiny{C}}}}=\int_0^{\infty}\left(R^2\mathcal{B}_{\text{\mbox{\tiny{CC}}}}\left(a\right)+(1-R^2)\mathcal{B}_{\text{\mbox{\tiny{CE}}}}\left(a\right)\right)f_A\left(a\right)da.
\end{equation}\endgroup
Note that we have to evaluate the above integral numerically as there exist no closed-form solution to it.
\subsection{Extension to orthogonal spectrum allocation}   
   The presented analysis can be directly extended to orthogonal spectrum allocation (OSA) technique. The performance for OSA can be evaluated just by setting number of channel equal to $N(1-p_o)$ and femto density equal to zeros in the expressions derived for CSA where $p_o\in[0,1]$ represent the portion of bandwidth is reserved for femto tier. 
\section{Area Energy Efficiency Evaluation}
\label{sec:Area-Energy-Efficiency-Evaluation}
The area energy efficiency represent the ratio of average transmission rate of an MBS per unit area to average energy spent by an MBS and it is measured in bps/(joules$\cdot$m$^2$). Therefore, the area energy efficiency of
the cell area ($a$) can be written as:\begingroup\makeatletter\def\f@size{8}\check@mathfonts
\begin{equation}
 \eta(a)=\frac{R_{\text{th}}\mathbb{E}\left[X|a\right]}{aNP_B\zeta},
 \label{eq:Area-Energy-Efficiency}
\end{equation}\endgroup
where \mbox{\small{$X$}} is the number of users in the cell, and \mbox{\small{$\zeta$}} is the activity factor of an MBS.
The above expression shows the relation between area energy efficiency and the traffic intensity. 
In following subsections we evaluate the area energy efficiency for SSA ({{$\eta_{\text{S}}$}}) and CSA ({{$\eta_{\text{C}}$}}) techniques.
\subsection{Evaluation of {{$\eta_{\text{S}}$}}}
\label{sec:Energy-Efficiency-Shared}
 In SSA mode, the disjoint subsets of channel are reserved for cell center region and cell edge region.
 The arrival processes of users in cell center region and in cell edge region are independent. Therefore, the expected number of users admitted into cell becomes the summation of expected number of admitted CCUs and expected number of admitted CEUs. Thus, for \mbox{\small{$M/M/N_c/N_c$}} and \mbox{\small{$M/M/N_e/N_e$}} MCs (refer Fig. \ref{fig:DecoupledQueue}) the expected number of users in a cell of area $a$ becomes \cite{Kleinrock}\begingroup\makeatletter\def\f@size{8}\check@mathfonts
 \begin{equation}
  \mathbb{E}[X|a]=R^2\frac{\lambda_Ma}{\mu}\left[1-\mathcal{B}_{\text{\mbox{\tiny{SC}}}}(a)\right]+(1-R^2)\frac{\lambda_Ma}{\mu}\left[1-\mathcal{B}_{\text{\mbox{\tiny{SE}}}}(a)\right],
 \end{equation}\endgroup
 where \mbox{\small{$\mathcal{B}_{\text{\mbox{\tiny{SC}}}}(a)$}} and \mbox{\small{$\mathcal{B}_{\text{\mbox{\tiny{SE}}}}(a)$}} are the BlocPs of CCUs and CEUs given in \eqref{eq:Blocking-CCU-Shared} and \eqref{eq:Blocking-CEU-Shared}, respectively. Moreover, in the shared mode of spectrum allocation the overall activity of an MBS is \mbox{\small{$\zeta_\text{S}=p_m\zeta_\text{SC}+(1-p_m)\zeta_\text{SE}$}} as \mbox{\small{$\zeta_\text{SC}$}} and \mbox{\small{$\zeta_\text{SE}$}} are the activity factors for \mbox{\small{$p_m$}} and \mbox{\small{$1-p_m$}} fractions of the spectrum, respectively. Therefore, the area energy efficiency for SSA can be evaluated as follows \begingroup\makeatletter\def\f@size{8}\check@mathfonts
 \begin{align}
 &\eta_{\text{S}}=\int_0^\infty\eta_{\text{S}}(a)f_A(a)da\nonumber\\
 \begin{split}
  &=\frac{\lambda_MR_{\text{th}}}{\mu NP_B\zeta_\text{S}}\int_0^\infty\left(R^2\left[1-\mathcal{B}_{\text{\mbox{\tiny{SC}}}}(a)\right]+(1-R^2)\left[1-\mathcal{B}_{\text{\mbox{\tiny{SE}}}}(a)\right]\right)f_A(a)da
 \end{split} \nonumber\\
 &=\frac{\lambda_MR_{\text{th}}}{\mu NP_B\zeta_\text{S}}\left(R^2\left[1-\mathcal{B}_{\text{\mbox{\tiny{SC}}}}\right]+(1-R^2)\left[1-\mathcal{B}_{\text{\mbox{\tiny{SE}}}}\right]\right).
 \label{eq:Area-Energy-Efficiency-Shared-CellSize-a}
\end{align}\endgroup 
Above expression clearly shows that the {{$\eta_{\text{S}}$}} is depending on the parameter $p_m$ (i.e number of channel for cell center/edge region which decides BlocPs) of SSA.
\subsection{Evaluation of {{$\eta_{\text{C}}$}}}
\label{sec:Energy-Efficiency-Co-channel}
In CSA mode, the user accesses channels from set \mbox{\small{$\mathcal{N}$}} regardless of its region. Therefore, the BlocP experienced by a CCU/CEU under co-channel mode is different compared to that under SSA and is given in \eqref{eq:Blocking-CCU-Co-channel}/\eqref{eq:Blocking-CEU-Co-channel}. The expected number of users admitted for given cell area $a$  becomes \mbox{\small{$\mathbb{E}[X|a]=\frac{\lambda_Ma}{\mu}\left[1-\mathcal{B}(a)\right]$}} where \mbox{\small{$\mathcal{B}(a)=R^2\mathcal{B}_{\text{\mbox{\tiny{CC}}}}(a)+(1-R^2)\mathcal{B}_{\text{\mbox{\tiny{CE}}}}(a)$}}. 
Therefore, similar to $\eta_{\text{S}}$, we can evaluate $\eta_{\text{C}}$ as follows:\begingroup\makeatletter\def\f@size{8}\check@mathfonts
\begin{equation}
 \eta_{\text{C}}=\frac{\lambda_MR_{\text{th}}\left[1-\mathcal{B}_{\text{\mbox{\tiny{C}}}}\right]}{\mu NP_B\zeta_\text{C}}.
\end{equation}\endgroup 
\section{Numerical Results and Discussion}
\label{sec:Numerical-Results-and-Discussion}
In this section, we first validate the developed analytical framework for evaluation of CovP and BlocP of a CCU/CEU along with the activity factor of an MBS through extensive simulations. Next, we present detailed numerical analysis for CovP and BlocP under SSA and CSA techniques. The assessment of CovP and BlocP is carried out with respect to the macro-tier  traffic load (\mbox{\small{$\lambda_M$}}) and the FAP density (\mbox{\small{$\lambda_F$}}). Next, we discuss area energy efficiency aspect of SSA and CSA techniques under different scenarios of FAP interference and macro tier load. We set the distance ratio threshold $R$ equal to $0.707$, which yields equal probability of the user being CCU and CEU. The transmission powers  of an MBS and FAP, i.e. \mbox{\small{$P_B$}} and \mbox{\small{$P_F$}}, are set to  1 Watt and 0.01 Watt per channel, respectively. For numerical analysis, the parameters are considered to be \mbox{\small{$\alpha=4$}}, \mbox{\small{$\beta=1$}}, \mbox{\small{$N=50$}}, \mbox{\small{$\lambda_B=5\times 10^{-6}$}}, \mbox{\small{$\lambda_F=50\lambda_B$}}, \mbox{\small{${R_{\text{th}}}=90$}} kbps, \mbox{\small{$B=180$}} kHz, and \mbox{\small{$\mu=1$}} per min; unless otherwise mentioned.

\subsection{Validation of presented analytical framework}
\begin{figure}[htp]
\centering
\includegraphics[width=.5\textwidth]{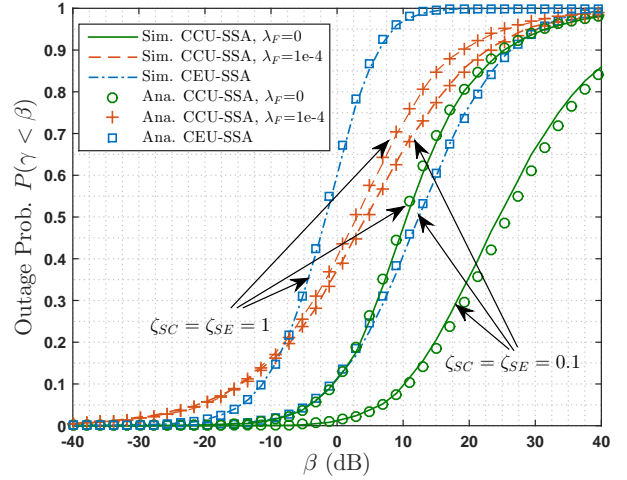} 
\caption{Validation of the outage probability of an CCU and an CEU for SSA.}
\label{fig:OutageValidation}
\end{figure}

\begin{figure}[htp]
\centering
\includegraphics[width=.5\textwidth]{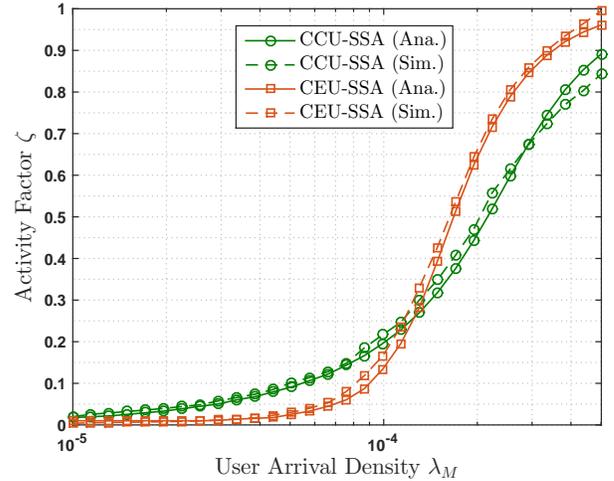} 
\caption{Validation of the activity factors of an MBS for cell center and cell edge bands for SSA.}
\label{fig:ActivityFactorValidation}
\end{figure}
\begin{figure}[htp]
\centering
\includegraphics[width=.5\textwidth]{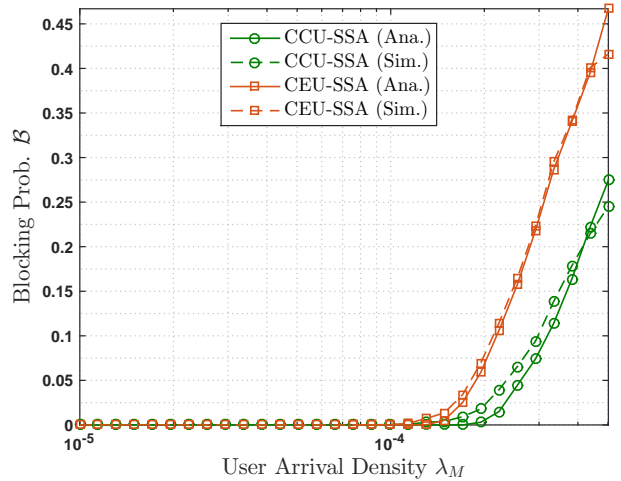} 
\caption{Validation of the blocking probability of an CCU and an CEU under SSA.}
\label{fig:BlockingValidation}
\end{figure}
Note that the outage probability (OutP) can be written as 1-CovP. Fig. \ref{fig:OutageValidation} depicts that the derived OutP expressions of a CCU and a CEU under SSA does match with the results obtained through Monte Carlo simulations for the activity factors equal to 0.1 and 1. It can be seen that the activity factor influences the OutP of a CCU only for smaller FAP density and of a CEU independent of FAP density. However, considering the co-channel interference from  FAPs (usually FAPs have higher density compared to MBSs i.e \mbox{\small{$\lambda_F\gg \lambda_B$}}), the OutP of a CCU becomes more or less independent of the activity factor, as the FAP interference becomes more dominant. Validation of the OutP expressions for CSA is not presented as they are extended from the coverage derivation under SSA. 

Fig. \ref{fig:ActivityFactorValidation} validates the activity factor of an MBS in the cell center (\mbox{\small{$\zeta_\text{SC}$}}) and the cell edge (\mbox{\small{$\zeta_\text{SE}$}}) bands versus user arrival density (\mbox{\small{$\lambda_M$}}) for $p_m=0.4$. It can be seen that the analytically derived activity factors are closely in agreement with the simulation results. Fig. \ref{fig:ActivityFactorValidation} depicts that the activity factors increases with \mbox{\small{$\lambda_M$}}. 
The increase of the activity factor is attributed to the following reasons. 1) Increase in  the traffic intensity (i.e. \mbox{\small{$\lambda_M$}}), and 2) Increase in the co-channel interference as increase of interference reduces the achievable transmission rate which further increases the bandwidth requirements. In case of increase of cell center activity factor, only the first reason has a role as increased interference from co-channel MBSs has less impact on achievable rate of CCUs. However, the higher rate of increment in the cell edge activity factor can be observed as both of the above mentioned reasons affect the achievable rate of CEUs which further increases the cell edge bandwidth occupancy.
Fig. \ref{fig:BlockingValidation} validates the blocking probability of an CCU (\mbox{\small{$\mathcal{B}_\text{SC}$}}) and an CEU (\mbox{\small{$\mathcal{B}_\text{SE}$}}) versus user arrival density \mbox{\small{$\lambda_M$}} for $p_m=0.4$. 
\subsection{Numerical analysis of CovP and BlocP}
\begin{figure}[htp]
\centering
\subfigure[]{
    \centering
    \includegraphics[width=.5\textwidth]{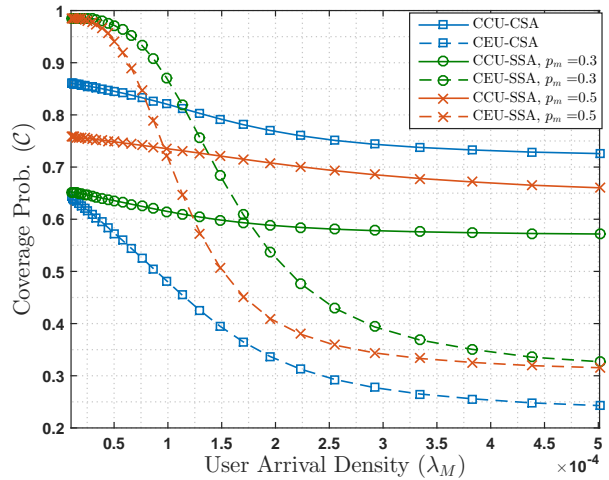}
    \label{fig:CovPvsLamM}
}
\subfigure[]{
    \centering
    \includegraphics[width=.5\textwidth]{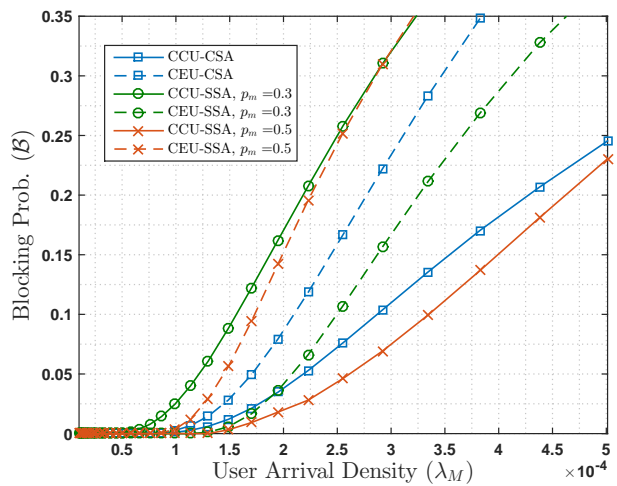} 
    \label{fig:BlockingProbvsLamM} 
}
\caption{Impact of macro tier traffic load \mbox{\small{$\lambda_M$}} on \ref{fig:CovPvsLamM} CovP and \ref{fig:BlockingProbvsLamM} BlocP experienced by CCUs and CEUs for $\lambda_F=100\lambda_B$.} 
\label{fig:CovP_Blocking_Vs_LamM}
\end{figure}
Fig. \ref{fig:CovP_Blocking_Vs_LamM} depicts the impacts of user arrival density (\mbox{\small{$\lambda_M$}}) on CovP and BlocP experienced by CCUs and CEUs. From Fig. \ref{fig:CovPvsLamM} it can be observed that  the CovP drops with an increase in \mbox{\small{$\lambda_M$}}. This is due to the fact that the activity factor of co-channel interfering MBSs increases with the increase of \mbox{\small{$\lambda_M$}}.
It can be observed that the SSA degrades the CovP for CCUs  and improves the CovP for CEUs  compared to CSA. Because in SSA, the density of co-channel FAPs interfering to CCUs increases by a factor ${1}/{p_m}$ and  the inter-tier interference to CEUs is avoided. The degradation and improvement in the CovP of CCUs and CEUs, respectively, can be observed to be increasing with decreasing value of $p_m$.
The CovP of CCU (CEU) is monotonically increases (decrease) with $p_m$. Here, we can observe that the CovP of a CCU drops below and the CovP of a CEU rises above those under CSA with drop in $p_m$. 
From Fig. \ref{fig:BlockingProbvsLamM} it is clear that the BlocP experienced by a CEU is relatively higher compared to that of a CCU in CSA as expected. 
This phenomenon leads to a network with higher chance of call admission for CCUs compared to CEUs. Under  SSA mode, the BlocP experienced by a CCU (CEU) monotonically decrease (increase) with $p_m$. Fig. \ref{fig:BlockingProbvsLamM} also shows that, the BlocP of a CCU and a CEU is observed to be lesser and higher, respectively, compared to those under CSA for $p_m=0.5$. However, it can be observed that the trend is reversed for a value of $p_m=0.3$.
This implies that there exits some $p_m$ that can  yield same blocking for a CCU and a CEU for a given \mbox{\small{$\lambda_F$}}. Therefore, in a way SSA allows to realize a network with fair chance of call admission for CCUs and CEUs.

\begin{figure}[htp]
\centering
\subfigure[]{
    \centering
    \includegraphics[width=.5\textwidth]{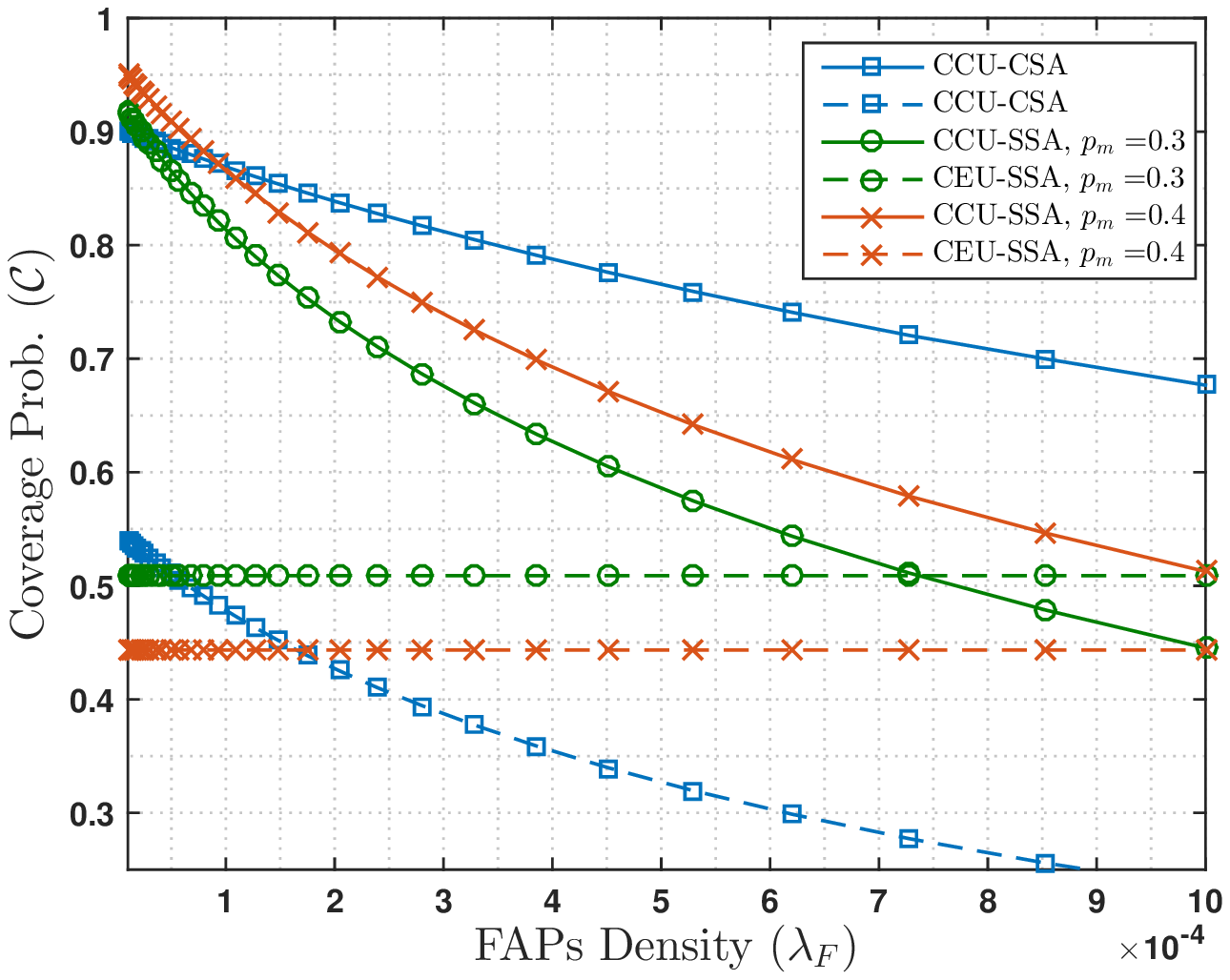} 
    \label{fig:CovPvsLamF}
}
\subfigure[]{
    \centering
    \includegraphics[width=.5\textwidth]{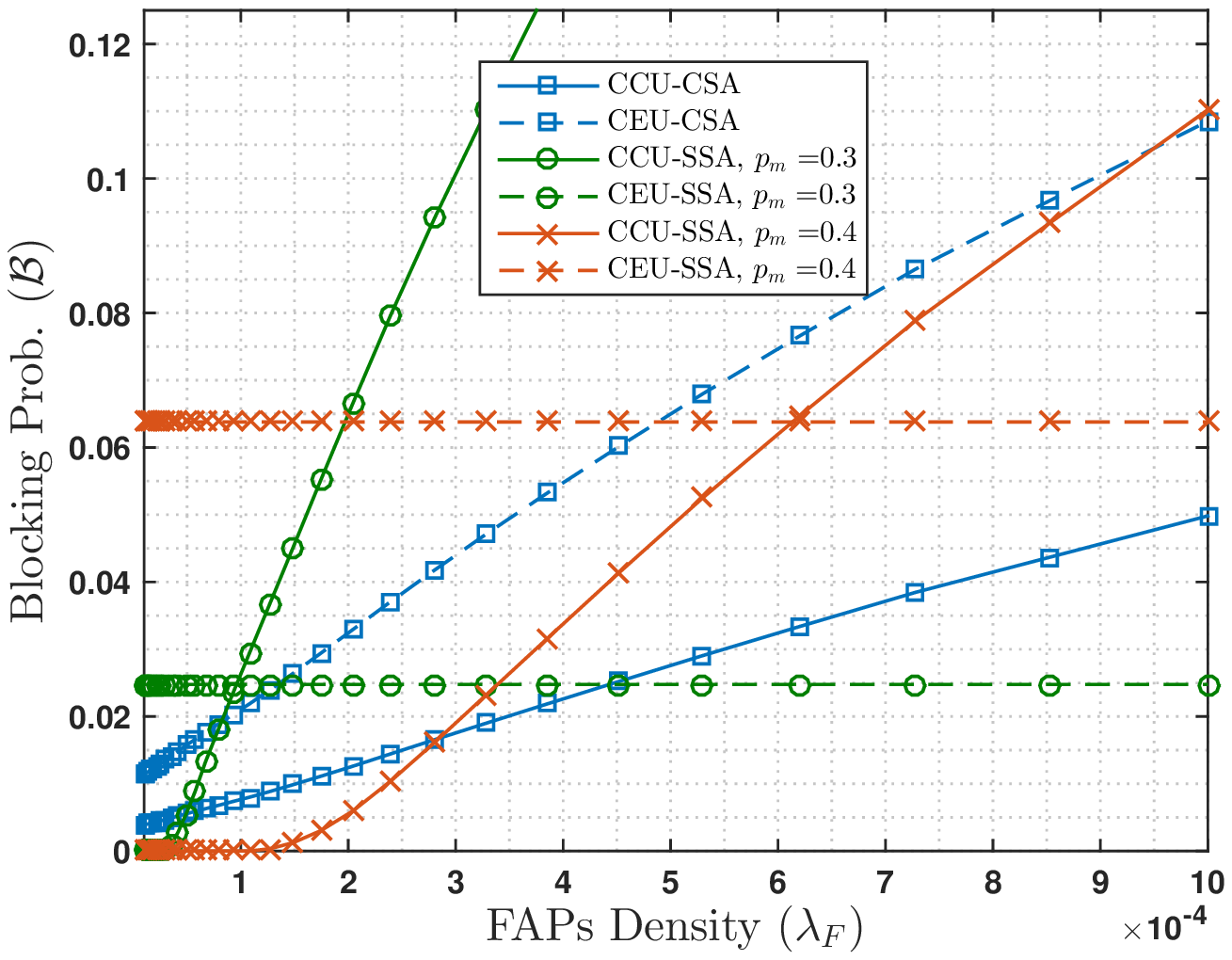} 
    \label{fig:BlockingProbvsLamF} 
}
\caption{Impact of FAP density \mbox{\small{$\lambda_F$}} on \ref{fig:CovPvsLamF} CovP and \ref{fig:BlockingProbvsLamF} BlocP experienced by CCUs and CEUs for \mbox{\small{$\lambda_M=2\times 10^{-4}$}}.} 
\label{fig:CovP_Blocking_Vs_LamF}
\end{figure}
The impacts of FAP density (\mbox{\small{$\lambda_F$}}) on CovP and BlocP experienced by CCUs and CEUs are shown in Fig. \ref{fig:CovP_Blocking_Vs_LamF}. It can be seen that the CovP and the BlocP of both CCUs and CEUs degrades with the increase in \mbox{\small{$\lambda_F$}}. However, the CovP and the BlocP of CEUs under SSA are independent of \mbox{\small{$\lambda_F$}}. Fig. \ref{fig:CovPvsLamF}  depicts that the improved CovP of a CEU under SSA compared to a CSA can be achieved even with a higher value of $p_m$ when FAP density is higher. The smaller value of $p_m$ renders a higher gain in the CovP of a CEU and, moreover, the gain increases with \mbox{\small{$\lambda_F$}}. However, the CovP of a CCU under SSA shifts below its value under CSA.  It can be noted that in the lower range of $\lambda_F$, the SSA yields better CovP to a CCU. This may be due to the fact of dominant interference from co-channel MBSs with reduced activity factor \mbox{\small{$\zeta_{\text{SC}}$}} as compared to the activity factor \mbox{\small{$\zeta_{\text{C}}$}} in CSA.  
For given value of $p_m$, Fig. \ref{fig:BlockingProbvsLamF} depicts that there exist a crossover point of FAP density beyond which SSA render lower BlocP to CEUs compared to CSA.  Because under CSA, the inter-tier interference to CEUs increase with increase in FAP density which makes them more and more bandwidth hungry. Further, it can be seen that the crossover points shifts towards right with increase in the value of $p_m$. 
Moreover, for lower FAP density and higher value of $p_m$, CCUs experience lower BlocP in SSA compared to CSA as in this scenario the inter-tier interference is insignificant and co-channel MSBs' cell center activity  is also lower. In SSA, BlocP of CCUs increase rapidly as  the effective density of co-channel FAPs raises by a factor $1/p_m$.
The CCUs get lesser band to access and receive severe inter-tier interference as the value of $p_m$ gets smaller.  
The small value of $p_m$ causes higher rate of increase in BlocP of a CCU with respect to the FAP density. Therefore, the $p_m$ must be lower bounded to limit the BlocP for a CCU while improving the blocking for a CEU. 

\begin{figure}[htp]
\centering
\subfigure[]{
    \centering
    \includegraphics[width=.5\textwidth]{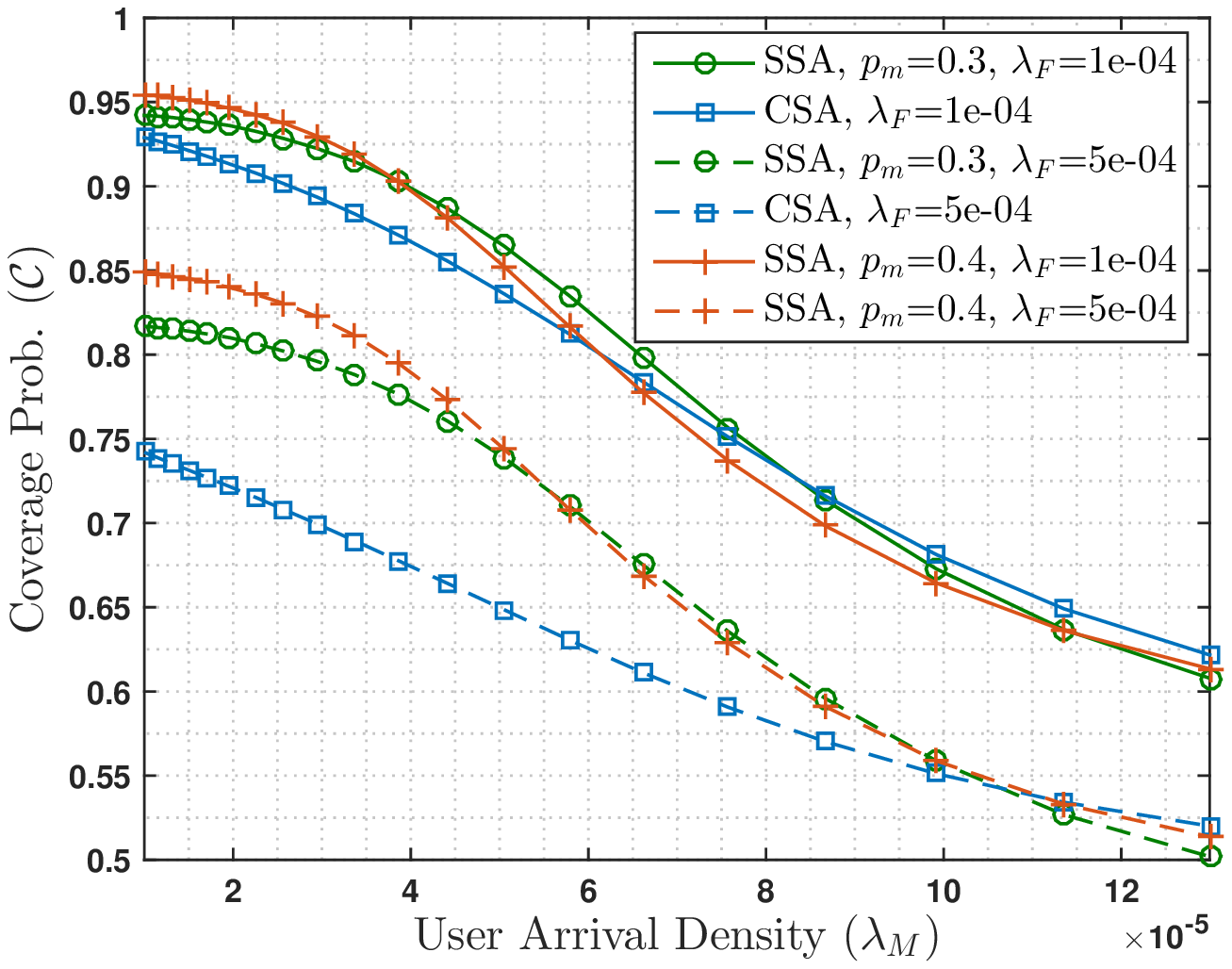} 
    \label{fig:OCovPvsLamM}
}
\subfigure[]{
    \centering
    \includegraphics[width=.5\textwidth]{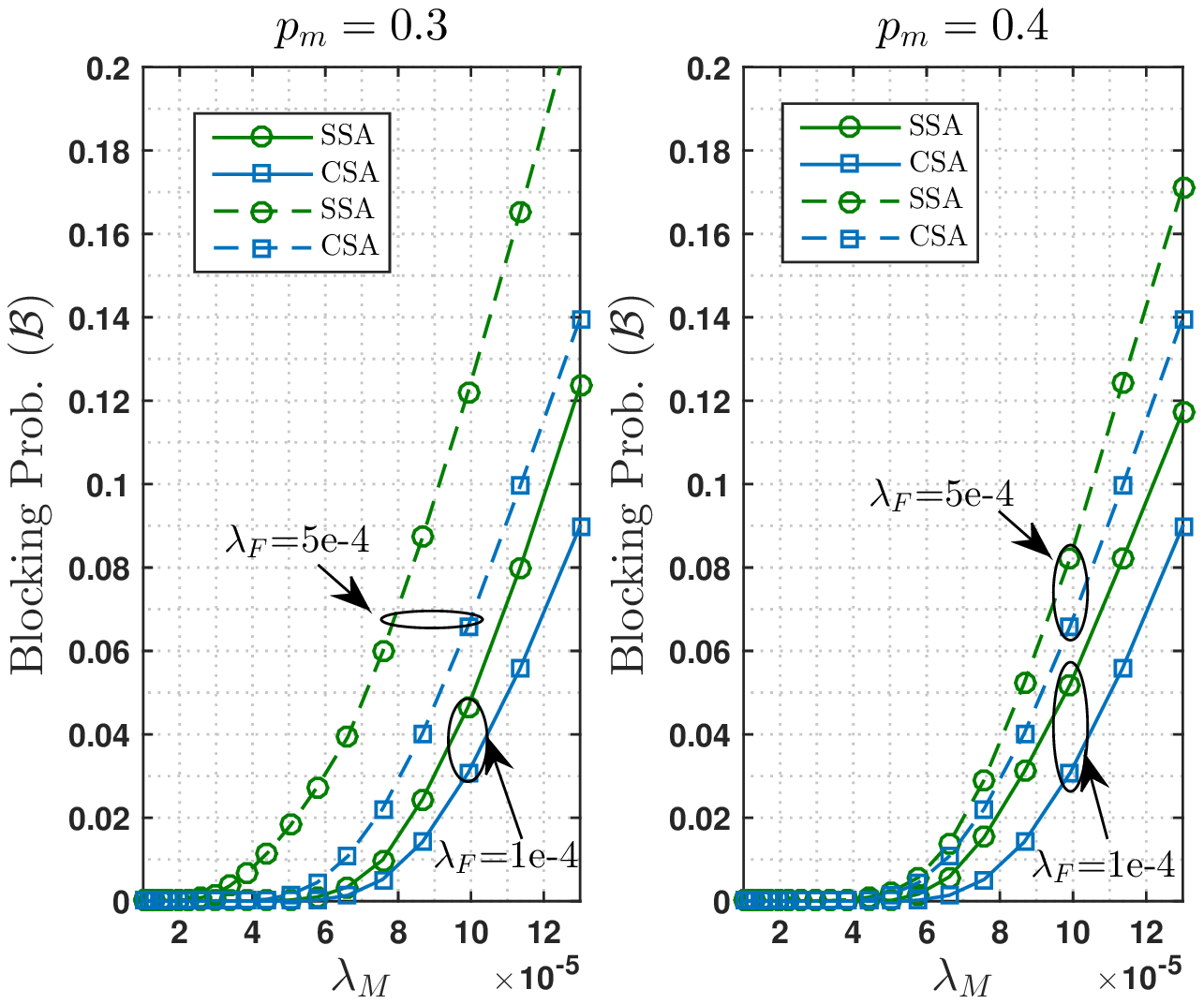} 
    \label{fig:OBlockingProbvsLamM} 
}
\caption{Impact of load \mbox{\small{$\lambda_M$}} on \ref{fig:OCovPvsLamM} overall CovP and \ref{fig:OBlockingProbvsLamM} overall BlocP experienced by a macro user for $R_{\text{th}}=180$ kbps.} 
\label{fig:Overall_CovP_Blocking_Vs_LamM}
\end{figure}
The overall CovP and BlocP of a macro user are depicted in Fig. \ref{fig:Overall_CovP_Blocking_Vs_LamM} as a function of traffic load (\mbox{\small{$\lambda_M$}}) for higher and lower densities of FAPs. From Fig. \ref{fig:OCovPvsLamM}, it is clear  that SSA yields better CovP for a macro user as compared to CSA independent of the scenario. The improvement is significantly higher in the lower values of \mbox{\small{$\lambda_M$}} and higher values of \mbox{\small{$\lambda_F$}}. Further, from Fig. \ref{fig:OBlockingProbvsLamM} it can be seen that SSA yield higher overall BlocP compared to the CSA. However, the gap between the BlocPs under SSA and CSA is dependent on the parameter $p_m$ and the density of FAPs. The BlocP of a macro user under SSA is relatively closer to that under CSA for the lower density of FAPs.  
However, for higher density of FAPs the reduced gap can be observed for $p_m=0.4$.
This implies that employment of SSA (with suitable choice of $p_m$ as per scenario) can yield better CovP along with a small increment in BlocP as compared to CSA. Nevertheless, it should be noted that the values of $p_m$ that yields region-wise fair BlocP  and better overall CovP   along with little degraded overall BlocP are not guaranteed to be the same.
\subsection{Numerical analysis of area energy efficiency ($\eta$)}
\begin{figure}[htp]
\centering
\includegraphics[width=.5\textwidth]{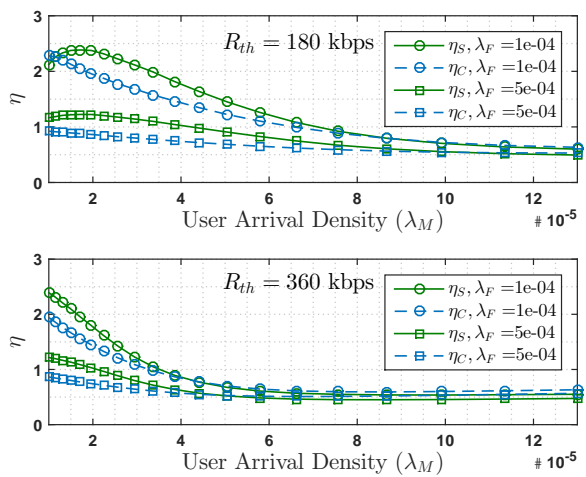} 
\caption{Energy Efficiency ($\eta$) versus traffic load for $p_m=0.3$.}
\label{fig:EnergyEfficiency}
\end{figure}
Fig. \ref{fig:EnergyEfficiency} depicts the area energy efficiency (\mbox{\small{$\eta$}}) as a function of macro tier real-time traffic load (\mbox{\small{$\lambda_M$}}) for rate requirement equal to 180 kbps and 360 kbps. It can be seen that, the values of \mbox{\small{$\eta_{\text{C}}$}} and \mbox{\small{$\eta_{\text{S}}$}}  decrease with increase in the value of \mbox{\small{$\lambda_M$}}. This is due to sudden increase of activity factor with \mbox{\small{$\lambda_M$}} which decides the rate of energy spent. Moreover, the transmission rate increases linearly with \mbox{\small{$\lambda_M$}} till point of having insignificant BlocP. It can be further observe that the \mbox{\small{$\eta_{\text{C}}$}} and \mbox{\small{$\eta_{\text{S}}$}} both saturates at higher \mbox{\small{$\lambda_M$}}. However, it can be seen that SSA provide better area energy efficiency as compared to CSA, i.e. \mbox{\small{$\eta_{\text{S}}>\eta_{\text{C}}$}}, since the activity of an MBS is observed to be lower under SSA as compared to CSA. 

\section{Conclusion}
\label{sec:Conclusion}
Previous works characterizing the interference, coverage, transmission capacity under consideration of best effort traffic in multi-tier heterogeneous networks (HetNets) modeled using independent homogeneous Poisson point processes (PPPs) exclusively focused on a generic user.  In this paper, we extend the analysis explicitly for a cell center/edge user (CCU/CEU) for real-time traffic in a random two-tier HetNet. We have developed a framework for load-aware performance analysis of such networks that includes analysis of blocking probability and area energy efficiency besides the coverage analysis. Identifying that the spectrum allocation technique has an impact on coverage probability, we considered two spectrum allocation techniques for the performance analysis namely: 1) shared spectrum allocation (SSA), and 2) co-channel spectrum allocation (CSA). Our study unveils that a macro user (CCU/CEU) experience better coverage in an instance of lower traffic load as the effective activity of a macro base station (MBS) is smaller. It is observed that the CSA deployment imposes higher blocking probability on CEUs as compared to CCUs. Through numerical results, it is demonstrated that the SSA helps in realizing a fair chance of call admission for CCUs and CEUs by configuring its parameter $p_m$ for a given traffic load. 
Moreover, the SSA can be configured to obtain a better overall coverage with little increment in the overall blocking probability as compared to CSA for lower/moderate  traffic load (depending upon FAP density). Our numerical analysis further reveals that SSA provides higher area energy efficiency compared to CSA. Therefore, the presented numerical analysis demonstrate that the SSA with properly adjusted parameter $p_m$ should be the choice over CSA for deployment in a random two-tier HetNet.

\appendices
\section{Proof of Theorem \ref{theorem:CCU_shared}}
\label{app:theorem_1}
The CovP of CCU at distance $r_c$ from serving MBS can be written as\begingroup\makeatletter\def\f@size{8}\check@mathfonts
 \begin{align}
  \mathcal{C}_{\text{\mbox{\tiny{SC}}}}(\beta,r_c)&=\mathbb{P}\left({h_0r_c^{-\alpha}}>\beta\left(I_c(r_c)+I_f\right)\right)\nonumber\\
  &\eqa\mathbb{E}\left[\exp\left(\beta r_c^{\alpha}\left(I_c(r_c)+I_f\right)\right)\right]\nonumber\\  
  &=\mathcal{L}_{I_c}(s,r_c)\mathcal{L}_{I_f}(s)\big |_{s=\beta r_c^\alpha}.
  \label{eq:CovP_CC_rc}
 \end{align}\endgroup
 Step (a) is directly followed as \mbox{\small{$h_0$}} is unit mean exponential random variable. Let \mbox{\small{$R_c$}} and \mbox{\small{$R_d$}} be the random variables representing distances of a CCU from the associated MBS and dominant interfering MBS. Therefore, the probability that $R_c$ is greater than $r_c$ becomes\begingroup\makeatletter\def\f@size{8}\check@mathfonts
\begin{align}
 F_{R_c}(r_c)&=\mathbb{P}\left[R_c>r_c|R_c\leq R\cdot R_d\right]=\frac{\mathbb{P}\left[R_c>r_c,R_c\leq R\cdot R_d\right]}{\mathbb{P}\left[R_c\leq R\cdot R_d\right]}\nonumber\\
&\eqa\frac{1}{R^2}\int\nolimits_{r_c}^{\infty}\int\nolimits_{\frac{r_c}{R}}^{\infty}f_{R_c,R_d}\left(r_c,r_d\right)dr_ddr_c\nonumber\\
&\eqb\exp\left(-\pi\lambda_B \frac{r_c^2}{R^2}\right).
\end{align}\endgroup
Step (a) directly follows using \eqref{eq:ProbBeingCCU} where \mbox{\small{$R_m=R_c$}}. Step (b) is derived through substitution of \eqref{eq:Joint_Distribution_Of_R1_and_R2}. 
Therefore, probability density function of $R_c$ becomes\begingroup\makeatletter\def\f@size{8}\check@mathfonts
\begin{align}
 f_{R_c}(r_c)&=\frac{d}{dr_c}\left[1-F_{R_c}(r_c)\right]=2\pi\lambda_B\frac{ r_c}{R^2}\exp\left(-\pi\lambda_B\frac{r_c^2}{R^2}\right).
 \label{eq:Rc_distribution}
\end{align}\endgroup 
Therefore, the CovP of a typical CCU can be written as\begingroup\makeatletter\def\f@size{8}\check@mathfonts
\begin{align}
\mathcal{C}_{\text{\mbox{\tiny{SC}}}}(\beta)&=\int_0^\infty\mathcal{C}_{\text{\mbox{\tiny{SC}}}}(\beta,r_c)f_{R_c}(r_c)dr_c\nonumber\\
 &\eqa\int_0^\infty\mathcal{L}_{I_c}(\beta r_c^\alpha,r_c)\mathcal{L}_{I_f}(\beta r_c^\alpha)f_{R_c}(r_c)dr_c\nonumber\\ 
\begin{split}
 &\eqb\frac{2\pi\lambda_B}{R^2}\int_0^{\infty}r_c \exp\left(-\pi\lambda_B r_c^2\left[R^{-2}+\zeta_\text{SC}\mathcal{H}(\beta,\alpha)+\vphantom{\frac{\tilde\lambda_F}{\lambda_B}(\beta\tilde P_F)^{\delta}\csc[\pi\delta]}\right.\right.\\
 &~~\left.\left.~~~~~~~~~~~~~~~~~~~~~~~~~~\delta\pi\frac{\tilde\lambda_F}{\lambda_B}(\beta\tilde P_F)^{\delta}\csc[\pi\delta]\right]\right)dr_c.
\end{split} 
\label{eq:CovPCC_Shared1}
\end{align}\endgroup
Step (a) is followed thorough the substitution of \eqref{eq:CovP_CC_rc}. Step (b) is directly followed through the substitution of \eqref{eq:LT_FAP}, \eqref{eq:Laplace_MBSInterference_CC}, and \eqref{eq:Rc_distribution} where \mbox{\small{$\mathcal{H}(\beta,\delta,R)=\beta^{\delta}\int_{{R^{-2}\beta^{-\delta}}}^\infty \frac{du}{1+u^\frac{1}{\delta}}$}}. Further, solving the integral in \eqref{eq:CovPCC_Shared1} completes the proof.
\section{Proof of Lemma \ref{lemma_3}}
\label{app:lemma_3}
The LT of interference \mbox{\small{$I_e=I_{e1}+I_{e2}$}} can be written as\begingroup\makeatletter\def\f@size{8}\check@mathfonts
\begin{equation}
 \mathcal{L}_{I_e}^+(s,r_e)=\mathcal{L}_{I_{e1}}^{+}(s,r_e)\mathcal{L}_{I_{e2}}(s,r_e).
  \label{eq:LT_Ie_conditioned}
\end{equation}\endgroup
Following Lemma \ref{lemma:LT_MBS_CCU} the LT of \mbox{\small{$I_{e2}$}}  can be written as follows\begingroup\makeatletter\def\f@size{8}\check@mathfonts
\begin{equation}
  \mathcal{L}_{I_{e2}}\left(s,r_e\right)=\exp\left(-\pi\zeta_\text{SE}\lambda_Bs^{\delta}\int_{\frac{r_e^2}{R^2s^{\delta}}}^\infty \frac{du}{1+u^\frac{1}{\delta}}\right).
  \label{eq:LT_Ie2}
 \end{equation}\endgroup
 In PPP of intensity $\lambda$, the number of points in the area $A$ follows Poisson distribution with mean $\lambda A$ and each point is uniformly distributed within the area $A$.
Therefore, for given $r_e$, the number of point in \mbox{\small{$\mathcal{S}_1(r_e)$}} are Poisson random variable with mean \mbox{\small{$cr_e^2=\pi\zeta_\text{SE}\lambda_B r_e^2[R^{-2}-1]$}} and each point is independently distributed as \begingroup\makeatletter\def\f@size{8}\check@mathfonts
\begin{equation}
 f(x)=\begin{cases}
  \frac{1}{\pi r_e^2(R^{-2}-1)}~~~~~\text{for}~r_e\leq \|x\|<\frac{r_e}{R},\\
  0~~~~~~~~~~~~~~~~~\text{otherwise}.
 \end{cases}
 \label{eq:Uniform_Pdf}
\end{equation}\endgroup
Since \mbox{\small{$\mathcal{S}_1(r_e)$}} must contain at least one node, the conditional distribution of number of points in \mbox{\small{$\mathcal{S}_1(r_e)$}} becomes\begingroup\makeatletter\def\f@size{8}\check@mathfonts
\begin{equation}
 \mathbb{P}(K=k|K\geq 1)=\frac{\mathbb{P}(K=k)}{1-\exp(-cr_e^2)}, ~~~~\text{for}~k=1,2,\dots
\end{equation}\endgroup
Therefore, the moment generating function (mgf) of \mbox{\small{$K$}} given \mbox{\small{$K\geq 1$}} becomes\begingroup\makeatletter\def\f@size{8}\check@mathfonts
\begin{equation}
 \mathbb{E}[z^k|k\geq 1]=\frac{1}{1-\exp(-cr_e^2)}\left[\mathbb{E}[z^k]-\exp(-cr_e^2)\right].
 \label{eq:Conditional_mgf_Poi1}
\end{equation}\endgroup
Substituting the mgf of a Poisson random variable of mean $cr_e^2$, i.e. \mbox{\small{$\mathbb{E}[z^k]=\exp(-cr_e^2[1-z])$}}, in  \eqref{eq:Conditional_mgf_Poi1} we can write\begingroup\makeatletter\def\f@size{8}\check@mathfonts
\begin{equation}
 \mathbb{E}[z^k|k\geq 1]=\frac{1}{1-\exp(-cr_e^2)}\left[\exp(-cr_e^2[1-z])-\exp(-cr_e^2)\right].
 \label{eq:Conditional_mgf_Poi}
\end{equation}\endgroup
 For given $r_e$, the LT of \mbox{\small{$I_{e1}$}} can be written as \begingroup\makeatletter\def\f@size{8}\check@mathfonts
 \begin{align}
  &\mathcal{L}_{I_{e1}}^+\left(s,r_e\right)=\mathbb{E}\left[\exp\left(-s\sum\nolimits_{x_i\in\mathcal{S}_1(r_e)}h_i\|x_i\|^{-\alpha}\right)\right]\nonumber\\
  &=\mathbb{E}\left[\prod\nolimits_{x_i\in\mathcal{S}_1(r_e)}\mathcal{L}_h\left(s\|x_i\|^{-\alpha}\right)\right]\nonumber\\
  &\eqa\mathbb{E}_k\left[\left[\int\limits_{r_e<\|x\|\leq\frac{r_e}{R}}\mathcal{L}_h\left(s\|x_i\|^{-\alpha}\right)f(x)dx\right]^{k}\bigg |k\geq 1\right]\nonumber\\
  \begin{split}
  &\eqb\frac{1}{1-e^{-cr_e^2}}\left[\exp{\left(-cr_e^2\int\limits_{r_e<\|x\|\leq\frac{r_e}{R}}\left[1-\mathcal{L}_h\left(s\|x_i\|^{-\alpha}\right)\right]f(x)dx\right)} \right.\\
  &~~~~~~~~~~~~~~~\left.\vphantom{\exp{\left(-cr_e^2\int\limits_{r_e<\|x\|\leq\frac{r_e}{R}}\left[1-\mathcal{L}_h\left(s\|x_i\|^{-\alpha}\right)\right]f(x)dx\right)}}-e^{-cr_e^2}\right] \nonumber\\
  &\eqc\frac{1}{1-e^{-cr_e^2}}\left[\exp{\left(-cr_e^2\int\limits_{r_e<\|x\|\leq\frac{r_e}{R}}\frac{f(x)dx}{1+\frac{1}{s}\|x\|^\alpha}\right)}-e^{-cr_e^2}\right]
  \end{split}\nonumber\\
  &\eqd\frac{1}{1-e^{-cr_e^2}}\left[\exp{\left(-\frac{cR^2}{\pi(1-R^2)}\int\limits_{r_e<\|x\|\leq\frac{r_e}{R}}\frac{dx}{1+\frac{1}{s}\|x\|^\alpha}\right)}-e^{-cr_e^2}\right]\nonumber\\
  &\eqe\frac{1}{1-e^{-cr_e^2}}\left[{\exp\left(-\frac{cR^2}{(1-R^2)}s^\delta\int_{{\frac{r_e^2}{s^{\delta}}}}^{\frac{r_e^2}{R^2s^{\delta}}} \frac{du}{1+u^\frac{1}{\delta}}\right)}-e^{-cr_e^2}\right].   
  \label{eq:LT_Ie1}
 \end{align}\endgroup
Step (a) directly follows as each point is independently and identically distributed. 
Step (b) follows using the \eqref{eq:Conditional_mgf_Poi}. Step (c) is obtain through the substitution of LT of the exponential distribution, i.e. {{$\mathcal{L}_h(s)=\frac{1}{1+s}$}}. Substituting {{$f(x)$}} from \eqref{eq:Uniform_Pdf} yields Step (d).
Next,converting from Cartesian to polar coordinates yields Step (e).  

Further, substituting \eqref{eq:LT_Ie2} and \eqref{eq:LT_Ie1} in \eqref{eq:LT_Ie_conditioned} completes the proof.
\section{Proof of Theorem \ref{theorem:CEU_shared}}
\label{app:theorem_2}
The CovP of a CEU is dependent on the activity of the dominant interfering MBS, as it imposes the condition of being in \mbox{\small{$(r_e,\frac{r_e}{R})$}} when the serving MBS is at distance \mbox{\small{$r_e$}}. Therefore, similar to \eqref{eq:CovP_CC_rc}, the CovP of a CEU at distance \mbox{\small{$r_e$}} from the serving MBS can be written as\begingroup\makeatletter\def\f@size{8}\check@mathfonts
 \begin{align}
  \mathcal{C}_{\text{\mbox{\tiny{SE}}}}(\beta,r_e)&=\zeta_\text{SE}\mathcal{L}_{I_{e}}^+(s,r_e)+(1-\zeta_\text{SE})\mathcal{L}_{I_e}(s,r_e)\big |_{s=\beta r_e^\alpha}.
  \label{eq:CovP_CEU_re}
 \end{align}\endgroup
 Now, let \mbox{\small{$R_e$}} be the distance of a CEU from the associated MBS.
Therefore, the probability that \mbox{\small{$R_e$}} is greater than \mbox{\small{$r_e$}} becomes \begingroup\makeatletter\def\f@size{8}\check@mathfonts
\begin{align}
 F_{R_e}(r_e)&=\mathbb{P}\left[R_e>r_e|R_e> R\cdot R_d\right]\nonumber\\
 &=\frac{\mathbb{P}\left[R_e>r_e,R_e> R\cdot R_d\right]}{\mathbb{P}\left[R_e> R\cdot R_d\right]}\nonumber\\
&\eqa\frac{1}{1-R^2}\int\nolimits_{r_e}^{\infty}\int\nolimits_{r_e}^{\frac{r_e}{R}}f_{R_e,R_d}\left(r_e,r_d\right)dr_ddr_e\nonumber\\
&\eqb\frac{1}{1-R^2}\left[\exp\left(-\pi\lambda_Br_e^2\right)-R^2\exp\left(-\pi\lambda_B \frac{r_e^2}{R^2}\right)\right].
\end{align}\endgroup
Step (a) directly follows using \eqref{eq:ProbBeingCEU} where \mbox{\small{$R_m=R_e$}} and the fact of \mbox{\small{$R_e$}} and \mbox{\small{$R_d$}} are independent. Step (b) is derived through substitution of the joint density function of \mbox{\small{$R_e$}} and \mbox{\small{$R_d$}}.
Therefore, probability density function of \mbox{\small{$R_e$}} becomes\begingroup\makeatletter\def\f@size{8}\check@mathfonts
\begin{align}
 f_{R_e}(r_e)&=\frac{d}{dr_e}\left[1-F_{R_e}(r_e)\right]\nonumber\\
 &=2\pi\lambda_B\frac{ r_e}{1-R^2}\left[\exp\left(-\pi\lambda_Br_e^2\right)-\exp\left(-\pi\lambda_B \frac{r_e^2}{R^2}\right)\right].
 \label{eq:Re_distribution}
\end{align}\endgroup
Therefore, the CovP of a typical CEU can be written as\begingroup\makeatletter\def\f@size{8}\check@mathfonts
 \begin{align}
  &\mathcal{C}_{\text{\mbox{\tiny{SE}}}}(\beta)=\int_{0}^{\infty}\mathcal{C}_{\text{\mbox{\tiny{SE}}}}(\beta,r_e)f_{R_e}(r_e)dr_e\nonumber\\
  &=\int_{0}^{\infty}\left[\zeta_\text{SE}\mathcal{L}_{I_{e}}^+(\beta r_e^\alpha,r_e)+(1-\zeta_\text{SE})\mathcal{L}_{I_e}(\beta r_e^\alpha,r_e)\right]f_{R_e}(r_e)dr_e.
  \label{eq:CovP_CEU1}
 \end{align}\endgroup
 Further, substituting \eqref{eq:Laplace_MBSInterference_CEU_Ie+}, \eqref{eq:Laplace_MBSInterference_CEU_Ie}, and \eqref{eq:Re_distribution} in \eqref{eq:CovP_CEU1} yields \eqref{eq:CovPEC_Shared} (given at the top of next page).
 \begin{figure*}[t]
 \begingroup\makeatletter\def\f@size{8}\check@mathfonts
  \begin{align}
&\mathcal{C}_{\text{\mbox{\tiny{SE}}}}(\beta)=(1-\zeta_\text{SE})\frac{2\pi\lambda_B}{1-R^2}\int_0^{\infty}\exp\left(-\pi\zeta_\text{SE}\lambda_Br_e^2\mathcal{H}(\beta,\delta,1)\right)\left[\exp\left(-\pi\lambda_Br_e^2\right)-\exp\left(-\pi\lambda_B\frac{r_e^2}{R^2}\right)\right]r_edr_e+
  \label{eq:CovPEC_Shared}\\
  &\zeta_\text{SE}\frac{2\pi\lambda_B}{1-R^2}\int_0^{\infty}\frac{\left[\exp\left(-\pi\zeta_\text{SE}\lambda_Br_e^2 \mathcal{G}(\beta,\delta,R)\right)-\exp\left(-\pi\zeta_\text{SE}\lambda_B(R^{-2}-1)r_e^2\right)\right]}{1-\exp(-\pi\zeta_\text{SE}\lambda_B(R^{-2}-1)r_e^2)}\exp\left(-\pi\zeta_\text{SE}\lambda_Br_e^2 \mathcal{H}(\beta,\delta,R)\right)\left[\exp\left(-\pi\lambda_Br_e^2\right)-\exp\left(-\pi\lambda_B\frac{r_e^2}{R^2}\right)\right]r_edr_e,\nonumber
 \end{align}
  \text{where}~$\mathcal{G}(\beta,\delta,R)=\beta^\delta\int\nolimits_{\frac{1}{\beta^\delta}}^{\frac{1}{R^2\beta^\delta}}\frac{dv}{1+v^{\frac{1}{\delta}}}=\mathcal{H}(\beta,\delta,1)-\mathcal{H}(\beta,\delta,R)$.
  \hrule \endgroup
 \end{figure*}
Using Maclaurin series \mbox{\small{$\frac{1}{1-x}=\sum_{n=0}^\infty x^n$}}, we can write \mbox{\small{$[1-\exp(-x)]^{-1}=\sum_{n=0}^{\infty}\exp(-nx)$}} for \mbox{\small{$x\geq 0$}}. Substituting this series in \eqref{eq:CovPEC_Shared} and further rearranging the solution of integral yields  \eqref{eq:CovPEC_Shared1}. This completes the proof.

\begin{IEEEbiography}[{\includegraphics[width=1in,height=1.25in,clip,keepaspectratio]{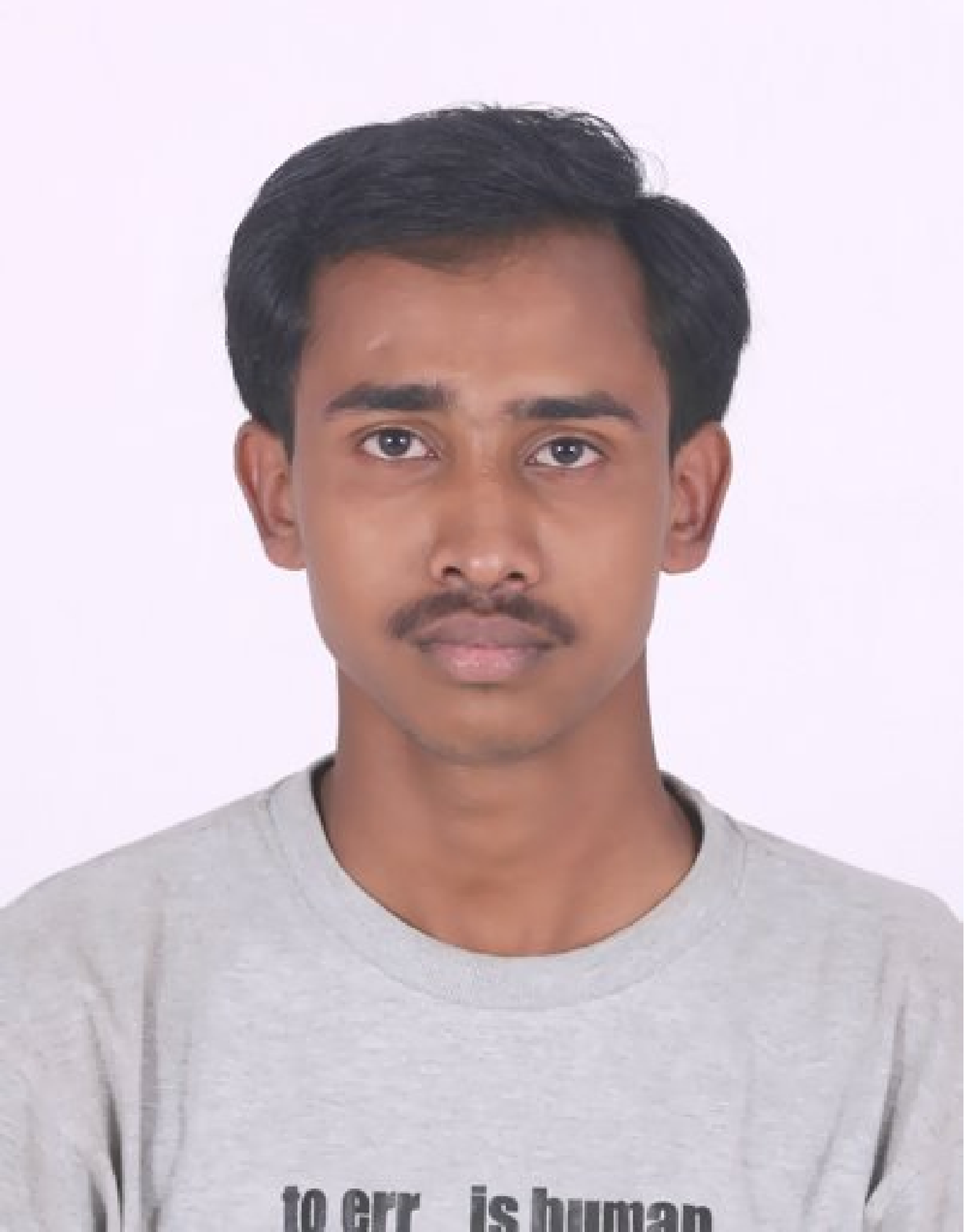}}]{Praful D. Mankar} has
completed B.Tech degree in Electronics and Telecommunications Engineering from Amaravati University, MH, India in 2006. In 2009, he has obtained M.Tech degree in Telecommunication System Engineering  from Indian Institute of Technology (IIT) Kharagpur, WB, India. He received Ph.D degree in wireless communication from IIT Kharagpur, WB, India in 2016.  Currently, He is working as a research assistant at IIT Kharagpur, WB, India. His research interest includes modeling, analyzing and designing of wireless networks. 
\end{IEEEbiography}

\begin{IEEEbiography}[{\includegraphics[width=1in,height=1.25in,clip,keepaspectratio]{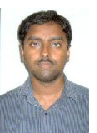}}]{Goutam Das} has obtained his Ph.D. Degree from the University of Melbourne, Australia in 2008. He has worked as a postdoctoral fellow at Ghent University, Belgium, from 2009-2011. Currently, he is working as an Assistant Professor in the Indian Institute of Technology, Kharagpur. He has served as a member in the organizing committee of IEEE ANTS since 2011. His research interests include optical access networks, optical data center networks, radio over fiber technology, optical packet switched networks and media access protocol design for application specific requirements.
\end{IEEEbiography}

\begin{IEEEbiography}[{\includegraphics[width=1in,height=1.25in,clip,keepaspectratio]{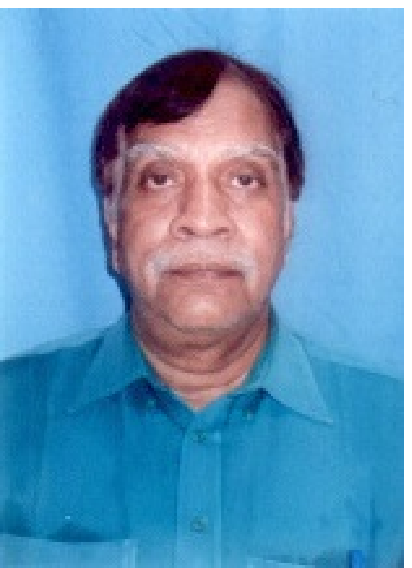}}]{Sant Sharan Pathak} has received his B.Tech and M.Tech Degrees in Electronics Engineering from IT BHU in 1976 and 1978 respectively,
and Ph.D. Degree in Digital Communications from IIT Delhi in 1984. He has joined the Department of Electronics and Electrical
Communication Engineering in 1985. His area of research interest includes physical layer network issues, receiver design optimization for Gaussian and non-Gaussian channels, security system design and analysis at application layer, image forensics with wireless camera pickup over Internet using low bandwidth wireless access channel, and similar others.
\end{IEEEbiography}
\end{document}